\documentclass[letterpaper]{article} 
\usepackage{aaai2027}  
\usepackage[hyphens]{url}  
\usepackage{graphicx} 
\usepackage{natbib}  
\usepackage{caption} 
\usepackage{algorithm}
\usepackage{algorithmic}
\usepackage{newfloat}
\usepackage{listings}
\DeclareCaptionStyle{ruled}{labelfont=normalfont,labelsep=colon,strut=off} 
\floatstyle{ruled}
\newfloat{listing}{tb}{lst}{}
\floatname{listing}{Listing}

\usepackage{booktabs}
\usepackage{makecell}
\usepackage{tabularx}
\usepackage{amsfonts}
\usepackage{nicefrac}
\usepackage{microtype}
\usepackage{xcolor}
\usepackage{amssymb}
\usepackage{amsmath}
\usepackage{amsthm}
\usepackage{multirow}
\usepackage{subcaption}

\newtheorem{theorem}{Theorem}

\newtheorem{definition}{Definition}

\newtheorem{proposition}{Proposition}

\title{STITCH-RAG: Spatio-Temporal Influence Tracing over Topic Hypergraphs for Multi-Hop Retrieval-Augmented Generation}
\author{
	Haodong Yang,
	Mengzhu Chen,
	Jia Cai\thanks{Corresponding author}
}

\affiliations{
	School of Statistics And Data Science, Guangdong  University of Finance $\&$ Economics\\
	Guangzhou, China\\
	haodongyang@student.gdufe.edu.cn, mengzhuchen@student.gdufe.edu.cn, jiacai1999@gdufe.edu.cn
}

\begin{document}

\maketitle

\begin{abstract}
Multi-hop retrieval-augmented generation requires a retriever to connect evidence distributed across documents while preserving a concise, faithful generation context. Existing indexes leave two complementary gaps: chunk-based RAG can break cross-passage evidence chains, whereas an unlabeled pairwise projection without generating-topic provenance cannot jointly preserve topic-level co-participation and per-occurrence entity descriptions. We propose STITCH-RAG, a hypergraph-based framework with three coupled components. First, a semi-merged topic hypergraph encodes multi-entity co-participation as topic-summary hyperedges while retaining per-chunk entity states linked by canonical-name equivalence. Second, spatio-temporal influence bridging propagation (STIBP) combines topic-space propagation with deterministic chunk-index linkage across name-equivalent states under frequency-adaptive decay. Third, continuous STIBP scores replace binary entity-match seeds in localized Personalized PageRank (PPR). We characterize the condition under which this prior assigns more PPR mass to ground-truth evidence than a binary prior. Under the reported protocol, STITCH-RAG attains the highest reported Contain-Acc and LLM-Acc point estimates among the compared methods on HotpotQA and 2WikiMultiHopQA, and higher Recall@8 than the methods included in the standardized retrieval comparison. Results on the mixed-domain benchmark remain auxiliary preference-based evidence because only LLM-judged accuracy is available.
\end{abstract}

\section{Introduction}

Standard dense-retrieval RAG partitions documents into fixed-size chunks and ranks them by query similarity. This representation can miss multi-hop chains that cross chunk or document boundaries. Structured RAG adds entity and relation indexes, but the published representations considered here do not jointly retain two forms of context. First, an unlabeled pairwise projection without a generating-topic identifier discards the topic that produced a multi-entity co-occurrence. This limitation does not apply to pairwise graphs augmented with relation labels, provenance, event identifiers, or equivalent higher-order attributes. Second, fully merging repeated entity mentions can erase chunk-specific roles, whereas treating every mention independently removes an explicit cross-chunk identity link. Both choices introduce retrieval noise when evidence depends on topic context and local entity state.

\begin{figure}[t]
\centering
\includegraphics[width=0.98\linewidth]{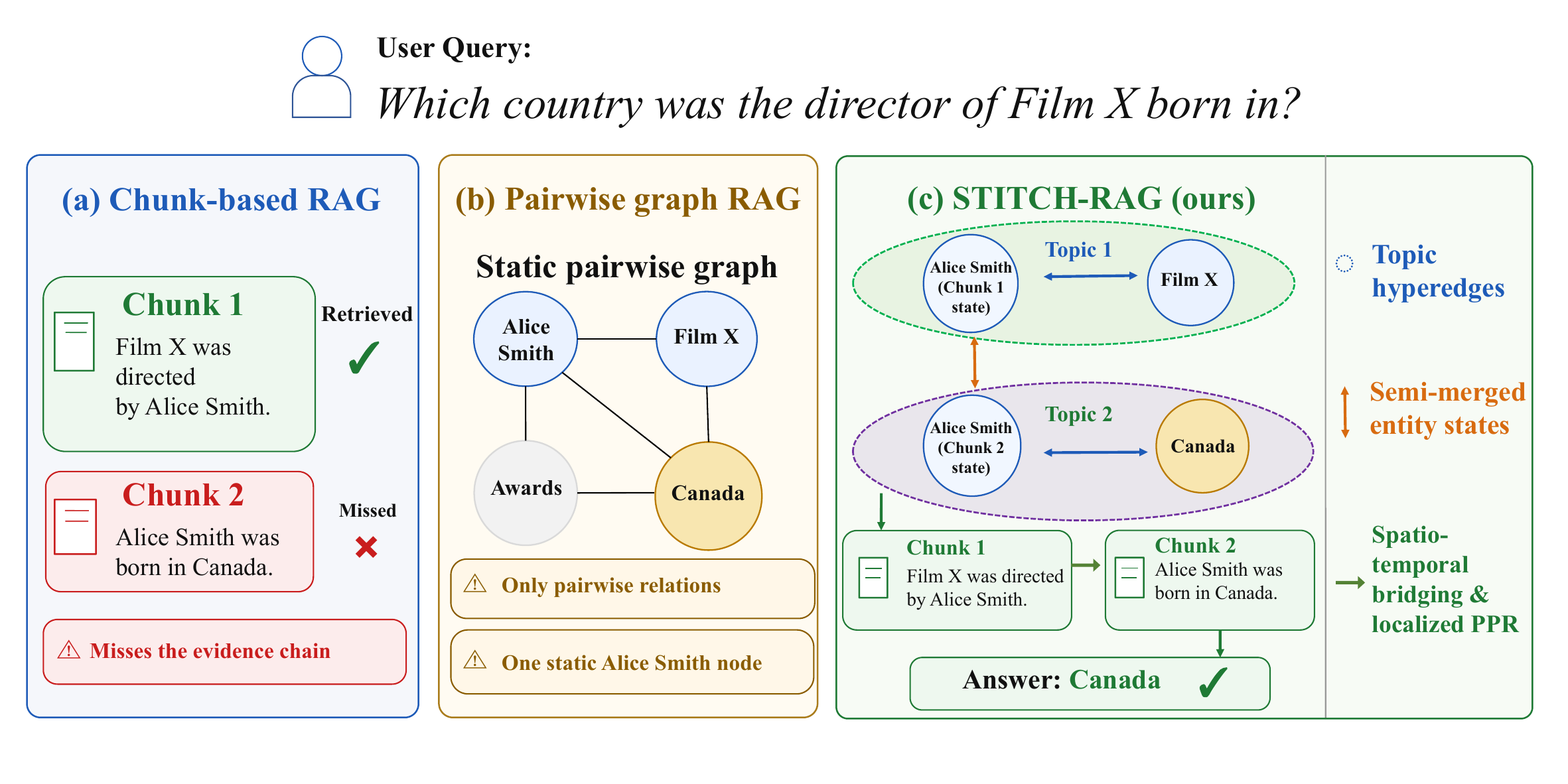}
\caption{Representation trade-offs for an example query about
the birthplace of Film X's director. (a) Chunk-based RAG can return
locally similar but disconnected passages. (b) The
illustrated unlabeled pairwise/full-merge representation retains the
path through Alice Smith, but its single static entity node mixes the
evidence-bearing birthplace context with unrelated contexts such as
awards. The panel depicts one unlabeled full-merge
representation, not all pairwise graphs. (c) STITCH-RAG preserves the two local
Alice Smith states and links them by name equivalence, allowing
query-conditioned propagation through topic hyperedges and across
chunk-specific entity states.}
\label{main:example}
\end{figure}

Figure~\ref{main:example} illustrates these representation trade-offs. In the controlled unlabeled pairwise/full-merge example, the graph retains a route from Film X to Canada, but a single Alice Smith node cannot distinguish the birthplace-bearing context from the award context. STITCH-RAG instead represents topic-level co-participation with hyperedges and retains chunk-specific entity states as separate nodes. An LLM extracts topic summaries as hyperedges and context-aware entity descriptions as nodes. The semi-merged hypergraph preserves those descriptions and links mentions with identical normalized canonical names. This link is a name-equivalence heuristic, not a ground-truth entity linker. Throughout the paper, \textit{spatial} denotes topical co-participation within a hyperedge, whereas \textit{temporal} denotes proximity in deterministic preprocessing indices. Across documents, this quantity is an index-proximity heuristic rather than a reading order, chronology, event time, or causal relation.

At retrieval time, dense similarity activates entity nodes. Spatial bridging transfers influence to co-participants in the same topic hyperedge, whereas index-proximity bridging transfers influence across name-equivalent states through bounded frequency-adaptive decay and query-relevance gating. The resulting continuous scores are projected onto chunks and used as priors for localized approximate Personalized PageRank (PPR), replacing binary entity-match initialization with graded, context-aware diffusion. On HotpotQA, 2WikiMultiHopQA, and Mix, STITCH-RAG attains the highest reported Contain-Acc and LLM-Acc point estimates under the stated protocol. Recall@8 is compared only where standardized retrieval outputs are available. LLM-judged metrics remain auxiliary (Section~\ref{sec:metrics}).

The novelty claim concerns the coupling of
topic-preserving hyperedges, context-specific entity states,
operational name-equivalence links, and continuous PPR
initialization in one retrieval pipeline. No single component is
claimed to be new in isolation. The main contributions are as
follows:

\begin{itemize}

\item \textbf{A semi-merged hypergraph with a
controlled recoverability characterization.} The construction
preserves per-chunk entity descriptions as distinct nodes while
exposing canonical-name equivalence through an
operational name-equivalence map.
Proposition~\ref{prop:semimerge_recoverability} shows that, among
the three controlled representations considered here,
semi-merging alone retains both properties required by the stated
state-specific propagation rule. The controlled
merge diagnostic reports a higher LLM-Acc point estimate for the
implemented semi-merge variant than for the two implemented
alternatives. Because retrieval metrics, structural compression,
and run-level variances are unavailable for these variants, this is
supporting, not conclusive, evidence.

\item \textbf{Topic-space and index-proximity
influence bridging with frequency-adaptive decay.} STIBP operates
before PPR through topic-hyperedge propagation and deterministic
chunk-index linkage across name-equivalent entity states. The
bounded decay $f(\Delta t)=\exp(-\tanh(\alpha\Delta t))$ uses
$\alpha=n_u/\bar n$ to attenuate index-distant links more strongly
for frequent names. Theorem~\ref{thm:stibp_snr} provides a
one-sided bound for contributions propagated from one
activated source. It neither models chronology nor characterizes
the final scores after max aggregation over multiple sources.

\item \textbf{Continuous influence priors with a
conditional evidence-mass characterization.} Existing PPR-based
methods~\cite{zhuang2025linearrag,gutierrez2024hipporag} commonly
use binary entity-match initialization. STITCH-RAG instead derives
a non-negative, normalized chunk prior from graded STIBP scores.
Proposition~\ref{thm:ppr_evidence_mass} gives an exact PPR
evidence-mass identity and a sufficient alignment condition under
which this prior assigns more evidence mass than a binary prior.
The strict binary-only diagnostic in
Table~\ref{tab:ablation_extended} is not used to verify this
condition because binary seed coverage, empty-match behavior, and
prior-support size are not measured separately.

\end{itemize}
The propositions and theorem characterize the chosen
representation, decay, and PPR prior. They clarify the design assumptions but do not
constitute end-to-end superiority guarantees.

\section{Related Work}

Structured RAG methods differ primarily in index representation and query-signal propagation. Under their published representations, the methods compared here do not jointly retain topic-level co-participation, per-occurrence descriptions, and operational name-equivalence links. STITCH-RAG couples these properties to construct graded localized-PPR priors. It does not claim novelty for each component in isolation.

\noindent\textbf{Graph-based RAG.}
GraphRAG, LightRAG, NodeRAG, and LinearRAG instantiate entity--relation, dual-layer, heterogeneous, and name-matched PPR indexes, respectively~\cite{edge2024local,guo2024lightrag,xu2025noderag,zhuang2025linearrag}. For an unlabeled pairwise projection, the generating-topic identity is unavailable at retrieval time (Proposition~\ref{prop:topic_nonidentifiability}). This scope excludes provenance-enriched pairwise graphs. SubGraphRAG and GFM-RAG~\cite{li2025simple,luo2025gfmrag} prioritize local precision, whereas HippoRAG variants~\cite{gutierrez2024hipporag,gutierrez2025rag} apply global PPR with binary matching. STITCH-RAG preserves topic identity and linked local states, then uses their graded influence as the PPR prior (Proposition~\ref{thm:ppr_evidence_mass}).

\noindent\textbf{Hypergraph-based RAG.}
HyperGraphRAG, Cog-RAG, and CogniRAG use relational, thematic, and causal hyperedges~\cite{luo2025hypergraphrag,hu2026cog,he2026cognirag}. In the controlled merge constructions, full merging removes local states and no merging removes $\phi$, whereas semi-merging retains both $\phi$ and $\psi$. Hyper-RAG, Cog-RAG, and IGMiRAG trade traversal, filtering, or routing cost against coverage~\cite{feng2026hyperrag,hou2026igmirag}. STITCH-RAG instead propagates only within the activated subgraph. Unlike the trained precedence model in OKH-RAG~\cite{wu2026knowledge}, its index-proximity decay is training-free and targets complex multi-hop queries, where structured retrieval is most useful~\cite{xiang2025whentousegraphs}.

\noindent\textbf{Complementary directions.}
Adaptive routing, long-context retrieval, and granularity selection address complementary design choices~\cite{yan2024corrective,jeong2024adaptive,asai2024selfrag,li2024structrag,li2024retrieval,xu2024retrieval,chen2024dense,sarthi2024raptor,kim2025autorag}. STITCH-RAG targets boundary fragmentation in large or dynamic knowledge bases, where one-time indexing can be amortized across repeated queries~\cite{ovadia2024finetuning,balaguer2024rag}. Appendix~\ref{app:extended_related} provides detailed comparisons.

\section{Method}
\subsection{Overview and Design Rationale}
\label{sec:rationale}

STITCH-RAG implements a \emph{compress-then-amplify} retrieval pipeline (Figure~\ref{main:framework}). During compression, documents are transformed into a semi-merged topic hypergraph $\mathcal{H}=(\mathcal{V},\mathcal{E},\phi,\psi)$ whose hyperedges are topic summaries and whose nodes are context-aware entity descriptions for individual chunks. During amplification, spatio-temporal influence bridging propagation (STIBP) identifies relevant entity-state pathways, and localized approximate Personalized PageRank (PPR) diffuses the resulting continuous scores over a chunk-level graph.

\begin{figure*}[t]
\centering
\includegraphics[width=0.98\textwidth]{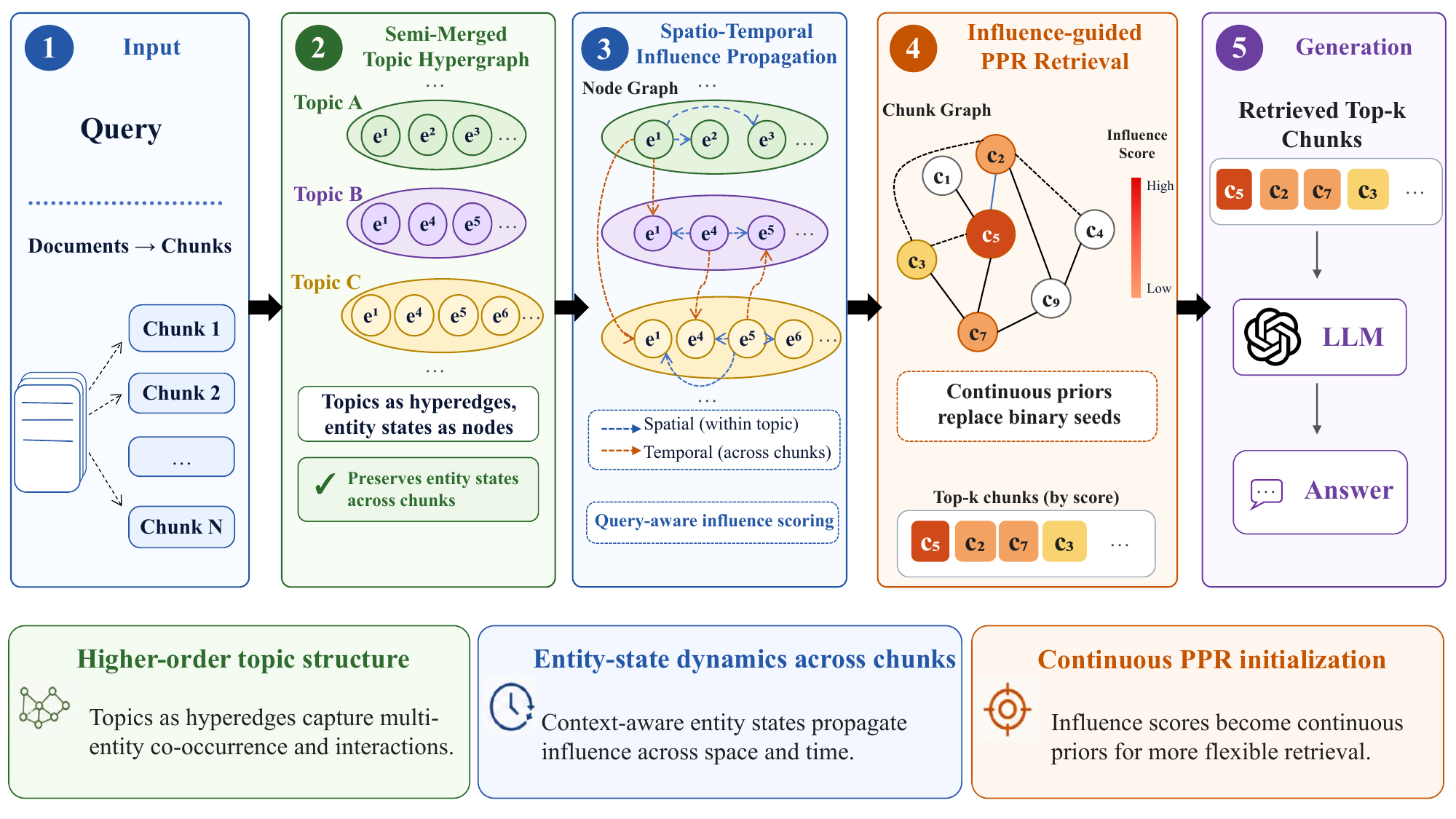}
\caption{Framework of STITCH-RAG:
(1) document chunking; (2) semi-merged topic hypergraph
construction, with topic-summary hyperedges and
entity-description nodes linked by $\phi$; (3) STIBP
through spatial and temporal channels;
(4) influence-guided chunk scoring as a continuous PPR
teleportation prior; and (5) top-$k$ chunk retrieval for
answer generation.}
\label{main:framework}
\end{figure*}

Semi-merging is required because temporal bridging uses $\phi$, whereas query-conditioned activation requires local descriptions $\psi$. Proposition~\ref{prop:semimerge_recoverability} formalizes this requirement for the controlled constructions. STIBP scores entity states on the hypergraph, and PPR diffuses those scores over the induced chunk graph. Table~\ref{tab:ablation_extended} tests their roles through controlled replacements.

\subsection{Hypergraph Construction Phase}

Given document collection $D$, we produce chunks
$C=\{c_k\}_{k=1}^{N_C}$ with at most $c_{\max}$ tokens each.
An LLM with prompt $p_{ext}$ processes each $c_k$ to extract
tuples $g_{k,r}=(e_{k,r},V_{e_{k,r}})$, where $e_{k,r}$ is
a topic-summary hyperedge and $V_{e_{k,r}}$ is its entity set:
\[
\mathcal{G}=\{g_{k,r}\mid g_{k,r}\in \operatorname{LLM}(c_k\mid p_{ext}),\; c_k\in C\}.
\]
Each node $v=(v^{name},v^{des})$ stores a canonical name and a chunk-specific description. The heuristic $\phi:\mathcal{V}\to\Sigma$ groups identical normalized names, while $\psi:\mathcal{V}\to\mathcal{D}$ records local context; members of each group are ordered by deterministic preprocessing index. Chunks remain the retrieval targets. A topic hyperedge with $m$ entities uses $m$ incidence links rather than $\binom{m}{2}$ pairwise links and retains its generating-topic identity (Proposition~\ref{prop:topic_nonidentifiability}).

\subsubsection{Structural Advantage over Pairwise Projection}

For $\mathcal{H}=(\mathcal{V},\mathcal{E})$, the projection $\Pi(\mathcal H)$ connects $u\ne v$ whenever $u,v\in e$ for some $e$. It discards the identity of the generating hyperedge. Proposition~\ref{prop:topic_nonidentifiability} therefore concerns retrieval from this unlabeled, provenance-free input, not pairwise graphs whose attributes recover the grouping (Appendix~\ref{app:SecD}).

\begin{proposition}[Non-identifiability of Pairwise Topic Projection]
\label{prop:topic_nonidentifiability}
There exist two topic hypergraphs $\mathcal{H}_1$ and
$\mathcal{H}_2$ such that $\Pi(\mathcal{H}_1)=\Pi(\mathcal{H}_2)$,
but query-conditioned spatial propagation over $\mathcal{H}_1$
and $\mathcal{H}_2$ assigns different influence scores to the
same target entities. Therefore, any retrieval rule
whose input is limited to $\Pi(\mathcal H)$, without labels,
provenance, or attributes that recover the generating hyperedges,
cannot in general reproduce topic-selective propagation over those
hyperedges.
\end{proposition}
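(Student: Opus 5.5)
The plan is to give an explicit counterexample on three entities and then turn it into the impossibility claim. Take $\mathcal V=\{a,b,c\}$. Let $\mathcal H_1$ have a single topic hyperedge $e=\{a,b,c\}$. Let $\mathcal H_2$ have three pairwise hyperedges $e_{ab}=\{a,b\}$, $e_{bc}=\{b,c\}$, $e_{ac}=\{a,c\}$. The first step is routine: directly from the definition of $\Pi$, both hypergraphs project to the triangle $K_3$ on $\{a,b,c\}$, so $\Pi(\mathcal H_1)=\Pi(\mathcal H_2)$.

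The second step fixes a concrete query-conditioned spatial propagation rule. Since the method section has not yet stated STIBP formally, I would commit to a minimal form of the spatial channel and state it as an assumption consistent with the overview. Each hyperedge $e$ receives a query relevance $r(e)\in[0,1]$, computed from its topic summary. Influence from an activated source $s$ to a co-participant $v$ is
\[
I(s\to v)=\max_{e\ni s,v} r(e)\, x_s ,
\]
where $x_s$ is the source's activation; a weight normalized by $|e|$ would also work. Activate only $a$, with $x_a=1$. In $\mathcal H_1$, the topic summary of $e$ is query-relevant, say $r(e)=1$, so $b$ and $c$ both receive influence $1$. In $\mathcal H_2$, the summaries of the pairwise topics may differ; let $r(e_{ab})=1$ and $r(e_{ac})=0$. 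Then $b$ receives $1$ and $c$ receives $0$, so target $c$ gets different scores in the two hypergraphs. If one prefers identical relevances, a size-normalized weight $1/(|e|-1)$ still separates the cases: $a$ sends $1/2$ to $c$ in $\mathcal H_1$ and $1$ in $\mathcal H_2$. Having this second construction makes the example robust to the exact propagation rule.

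The third step is the impossibility argument. Any retrieval rule $F$ whose input is only $\Pi(\mathcal H)$, together with the query and node-level data shared by both constructions, must return $F(\Pi(\mathcal H_1))=F(\Pi(\mathcal H_2))$. It therefore cannot equal the spatial propagation on both $\mathcal H_1$ and $\mathcal H_2$, because those two values differ. "In general" then means that no such $F$ agrees with topic-selective propagation on every hypergraph. I would add a remark that attaching hyperedge identifiers to projected edges breaks the collision, because $\mathcal H_1$ labels all three edges with $e$ while $\mathcal H_2$ uses three distinct labels. This marks the scope exclusion in the statement.

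I expect the main obstacle to be the second step. The proposition refers to "query-conditioned spatial propagation" before that rule is formally defined, so the proof has to commit to properties the later STIBP definition actually satisfies. The natural property is dependence on hyperedge membership or hyperedge-level relevance, not only on pairwise adjacency. I would state this dependence as the one assumption the argument uses and check it against the eventual STIBP formula.
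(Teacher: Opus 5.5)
Your proof is correct and follows the paper's argument: two hypergraphs projecting to the same complete graph, one with a single covering hyperedge and one with smaller pair-covering hyperedges (the paper uses the four triples of $\{a,b,c,d\}$ versus the single edge $\{a,b,c,d\}$), a single activated source whose spatial influence reaches a target in one hypergraph but not the other, and the observation that any rule on $\Pi(\mathcal H)$ must return identical outputs. Your max-over-hyperedges rule is exactly the paper's spatial bridging update. The paper also treats both cases $S_q>0$ and $S_q=0$ for the coarse hyperedge, which shows that no relevance assignment there reproduces the selective pattern; for your single triple $\{a,b,c\}$ this is a one-line addition, since that edge always gives $b$ and $c$ equal influence.
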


\subsubsection{Recoverability Benefit of Semi-Merging}

The controlled constructions differ only in their exposure of $\phi$ and retention of $\psi$ (Definitions~\ref{def:semi_merge}--\ref{def:full_no_merge}). For the propagation rule below, only semi-merging retains both; the result is a controlled recoverability comparison rather than a universal claim.

\begin{proposition}[Recoverability Separation of Semi-Merged Entity States]
\label{prop:semimerge_recoverability}
Let $q$ be a query and let $G=\{v_1,\ldots,v_m\}$ be a
name-equivalence group with $\phi(v_i)=\phi(v_j)$ for all $i,j$.
Suppose $v_s\in G$ satisfies $A_q(v_s)>\delta$, and $v_r\in G$
lies in a ground-truth evidence chunk with $A_q(v_r)\leq\delta$.
Assume $v_r$ is not reachable from $v_s$ through any activated
topic hyperedge, so that its only structural path from $v_s$
is through name-equivalence temporal linkage. If
$S_q(e_{v_r})>0$ and $\operatorname{sim}(q,c_{v_r})>0$, then
$v_r$ receives no temporal influence under
the no-merge construction $\mathcal{H}_N$, which treats
each mention as an independent node with no cross-chunk
$\phi$; cannot be assigned state-specific influence under
the full-merge construction $\mathcal{H}_F$, which
collapses each name-equivalence group into a single node; and
receives strictly positive temporal influence under the
semi-merged $\mathcal{H}$.
\end{proposition}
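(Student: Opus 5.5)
The plan is a case analysis over the three controlled constructions of Definitions~\ref{def:semi_merge}--\ref{def:full_no_merge}. Each case is checked against the temporal-bridging rule that STIBP uses. I take that rule in the form suggested in Section~\ref{sec:rationale}: an activated source $v_s$ (with $A_q(v_s)>\delta$) passes to another state $v_r$ with $\phi(v_r)=\phi(v_s)$ an amount
\[
I^{\mathrm{temp}}_q(v_s\to v_r)=A_q(v_s)\, f(\Delta t_{sr})\, g_q(v_r),
\qquad f(\Delta t)=\exp(-\tanh(\alpha\Delta t)).
\]
Here $\Delta t_{sr}$ is the index distance between the two chunks, and the gate $g_q(v_r)$ is built from $S_q(e_{v_r})$ and $\operatorname{sim}(q,c_{v_r})$. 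If the exact gate turns out to be a product or a positive combination of these two quantities, the argument below is unchanged. The only facts about it I need are that it is positive whenever both are positive, and that it is evaluated on the local description $\psi(v_r)$ and the local chunk $c_{v_r}$.

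For $\mathcal{H}_N$, the argument is purely structural. Each mention is its own node, and $\phi$ is not exposed across chunks, so the set of temporal neighbours of $v_s$ contains no element of $G\setminus\{v_s\}$. The temporal term for $v_r$ is therefore an empty sum and equals $0$. By hypothesis, no activated hyperedge connects $v_s$ to $v_r$, so the spatial channel contributes nothing that could stand in for it. Hence $v_r$ receives no temporal influence.

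For $\mathcal{H}_F$, the whole group $G$ collapses to a single node $\bar v$. Any influence is then assigned to $\bar v$ as one scalar computed from a merged description. Two states $v_i\neq v_j$ of $G$ cannot receive different values, and in particular the gate cannot be evaluated on $\psi(v_r)$ as distinct from $\psi(v_s)$. This is the meaning of ``cannot be assigned state-specific influence.'' To make it rigorous, I would exhibit two configurations that produce the same merged node but have different local gates $g_q(v_r)$; any map defined on $\mathcal{H}_F$ must give them the same output, which is exactly the non-identifiability pattern of Proposition~\ref{prop:topic_nonidentifiability}, applied to states rather than to hyperedges. Collapsing also forces $\Delta t_{sr}=0$, or leaves it undefined, so index proximity is lost as well.

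For the semi-merged $\mathcal{H}$, both $\phi$ and $\psi$ are retained. Since $\phi(v_r)=\phi(v_s)$, the pair $(v_s,v_r)$ is a temporal link, and the chunk indices give a finite $\Delta t_{sr}$. Each factor is then strictly positive:
\begin{itemize}
\item $A_q(v_s)>\delta\ge 0$ by hypothesis;
\item $f(\Delta t_{sr})\ge e^{-1}>0$, because $\tanh$ is bounded by $1$;
\item $g_q(v_r)>0$, by the two positivity hypotheses on $S_q(e_{v_r})$ and $\operatorname{sim}(q,c_{v_r})$.
\end{itemize}
So $I^{\mathrm{temp}}_q(v_s\to v_r)>0$. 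If STIBP aggregates incoming contributions by a maximum, the aggregated score is at least this value and is therefore also positive.

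The main obstacle is the full-merge case, because ``cannot be assigned'' is a negative claim about every rule that operates on $\mathcal{H}_F$. I would handle it with the explicit two-instance indistinguishability construction above rather than by inspecting STIBP alone. I would also make sure that the hypothesis $A_q(v_r)\le\delta$ is used: it rules out $v_r$ being activated directly, so its score can only come through the temporal link.
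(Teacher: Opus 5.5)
Your three-case structure matches the paper's proof. In $\mathcal{H}_N$ the name-equivalence group of $v_s$ is a singleton, so temporal bridging never reaches $v_r$. In $\mathcal{H}_F$ the group becomes one node $\bar v$ with one scalar influence. The paper splits this into an activated subcase and a non-activated subcase, the latter arising when the aggregated description $\bar v^{des}$ is diluted below $\delta$. In $\mathcal{H}$ the temporal contribution is a product of four strictly positive factors, and max aggregation preserves positivity. At the paper's level of rigor, your main-line argument in each case is fine.

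The step that would fail is the one you name as the way to make the full-merge case rigorous. The temporal gate is $S_q(e_{v_r})\operatorname{sim}(q,c_{v_r})$, a hyperedge similarity times a chunk similarity. It is not evaluated on $\psi(v_r)$; descriptions enter STIBP only through the activations $A_q$.

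Both gate factors survive full merging. The merged node $\bar v$ keeps membership in every original hyperedge and chunk, and hyperedges are extracted per chunk. So two instances with the same $\mathcal{H}_F$ necessarily have the same gates. Conversely, two instances with different gates can be told apart by a rule that reads $\mathcal{H}_F$. The pair you propose therefore does not exist. A rule on $\mathcal{H}_F$ can in fact differentiate the chunks of $G$ by exactly these gates, which is why the paper restricts the negative claim to differentiation \emph{using entity-state information}.

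What full merging actually destroys is $\psi$, and with it the per-state activations $A_q(v_i)$. Your indistinguishability pair should therefore hold fixed all hyperedges, chunks, memberships, and the aggregated $\bar v^{des}$. It should change only which occurrence carries the query-relevant description, for instance by permuting the per-occurrence descriptions under a symmetric aggregation. Any rule on $\mathcal{H}_F$ then returns identical output on both instances, whereas STIBP on $\mathcal{H}$ activates different sources and in general assigns different per-state influence.
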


\subsection{Retrieval Phase}

Given query $q$, retrieval has four stages: (1) dense-similarity activation of entity nodes and topic hyperedges; (2) STIBP over spatial and temporal channels to obtain graded entity-level influence; (3) projection of entity influence onto chunks with name-equivalence group-size weighting; and (4) localized PPR on the $\phi$-induced chunk graph, using the projected influence as a continuous teleportation prior.

\subsubsection{Initial Node and Hyperedge Activation}

We embed the query and entity descriptions with the same model and assign
\[
A_q(v)=
\begin{cases}
\operatorname{sim}(q,v^{des}), & \operatorname{sim}(q,v^{des})>\delta,\\
0, & \text{otherwise},
\end{cases}
\]
with $\delta=0.5$. Hyperedges use the unthresholded
$S_q(e)=\operatorname{sim}(q,e)$ because spatial gating already
attenuates low-relevance topics.

\subsubsection{Spatio-Temporal Influence Bridging Propagation}

STIBP maintains $A_q(v)$, $a^{space}(v)$, and $a^{time}(v)$ per node;
within each propagated channel, it retains the maximum source value.

\noindent\textbf{Spatial bridging.}
For activated $u\in e$ and each $v\in V_e\setminus\{u\}$:
\[
a^{space}(v)=A_q(u)\cdot S_q(e).
\]
This update requires both source activation and topic relevance.

\noindent\textbf{Temporal (index-proximity) bridging.}
For the name-equivalence group
$V_u^*=\{v\in\mathcal{V}\mid\phi(v)=\phi(u)\}$ of activated
$u$, set $\alpha=n_u/\bar{n}$ where $n_u=|V_u^*|$ and
$\bar{n}$ is the corpus-wide average group size.
Here $t_v$ denotes the deterministic preprocessing index; across documents, the resulting distance is order-dependent index proximity rather than chronology. For each
$v\in V_u^*\setminus\{u\}$ with $\Delta t=|t_u-t_v|$, define
the contribution proposed by source $u$ as
\[
a^{time}_u(v)=A_q(u)\cdot S_q(e_v)\cdot
\exp(-\tanh(\alpha\Delta t))\cdot \operatorname{sim}(q,c_v).
\]
Here $e_v$ denotes a hyperedge containing $v$. When $v$
belongs to several hyperedges, $S_q(e_v)$ stands for
$\max_{e\ni v}S_q(e)$, matching the max-aggregation rule
above. With $\mathcal A_q=\{u:A_q(u)>0\}$, the stored temporal
channel is
\[
a^{time}(v)=\max\!\left(0,
\max_{\substack{u\in\mathcal A_q:\,u\ne v,\\
\phi(u)=\phi(v)}}a^{time}_u(v)\right).
\]
The decay function $f(\Delta t)=\exp(-\tanh(\alpha\Delta t))$ lies in $(e^{-1},1]$: it retains nonzero mass for distant evidence while attenuating it. Because $\alpha\propto n_u$, distant states of frequent names receive stronger attenuation (Table~\ref{tab:decay_comparison}, Appendix~\ref{app:additional_results}).

The final entity influence aggregates all three channels:
\[
a(v)=\operatorname{mean}\bigl(A_q(v),\,a^{space}(v),\,a^{time}(v)\bigr).
\]
Mean aggregation weights the three channels equally and induces the same ranking as summation because every node has all three channels.

\subsubsection{Per-Source Analysis of
Frequency-adaptive STIBP}

The following per-source result characterizes bounded decay under explicit signal--noise assumptions. It is not an end-to-end retrieval guarantee.

\begin{theorem}[Per-Source Lower-Bound Separation under Frequency-Adaptive Temporal Decay]
\label{thm:stibp_snr}
Let $u$ be an activated entity state with $A_q(u)=a>0$ and
name-equivalence group $V_u^*$. For each $v\in V_u^*\setminus\{u\}$,
let $\Delta_v=|t_u-t_v|$ and
$g_v=S_q(e_v)\operatorname{sim}(q,c_v)$. Let $\mathcal{R}$ and
$\mathcal{N}$ be nonempty disjoint subsets
of $V_u^*\setminus\{u\}$ satisfying
$g_r\geq\beta$ for all $r\in\mathcal{R}$,
$0<g_n\leq\epsilon$ and $\Delta_n\geq\tau$ for all
$n\in\mathcal{N}$, with $\beta>\epsilon>0$ and $\tau>0$. With
$\alpha=n_u/\bar{n}$, where $n_u=|V_u^*|$:
\[
\frac{\max_{r\in\mathcal{R}}a^{time}_u(r)}
{\sum_{n\in\mathcal{N}}a^{time}_u(n)}
\geq
\frac{\beta}{|\mathcal{N}|\epsilon}\exp\bigl(\tanh(\alpha\tau)-1\bigr).
\]
For fixed $\mathcal R$, $\mathcal N$, $\beta$, $\epsilon$,
$\tau$, and $\bar n$, the right-hand side is non-decreasing
in $\alpha$.
\end{theorem}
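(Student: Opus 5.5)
The plan is to work directly from the per-source contribution $a^{time}_u(v)=a\,g_v\,f_\alpha(\Delta_v)$, where $f_\alpha(\Delta)=\exp(-\tanh(\alpha\Delta))$. The numerator will be bounded below and the denominator above, each separately, using only the elementary range $\tanh(\cdot)\in[0,1)$ on nonnegative arguments. This confines $f_\alpha$ to $(e^{-1},1]$, as already noted after the definition of the decay. No earlier proposition is needed; the statement is a pointwise inequality about a single source $u$, so max aggregation over sources plays no role.

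\textbf{Numerator.} Pick any $r\in\mathcal R$, which is possible because $\mathcal R$ is nonempty. Since $\alpha\Delta_r\ge 0$, we have $\tanh(\alpha\Delta_r)<1$, so $f_\alpha(\Delta_r)>e^{-1}$. Hence $\max_{r}a^{time}_u(r)\ge a\,g_r\,f_\alpha(\Delta_r)\ge a\beta e^{-1}$. This step needs no assumption on $\Delta_r$, which is why the bound is one-sided and uses only the worst-case decay $e^{-1}$ for relevant states.

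\textbf{Denominator.} For each $n\in\mathcal N$ we have $\Delta_n\ge\tau>0$ and $\alpha>0$, because $n_u\ge 2$ whenever $\mathcal N$ is nonempty. Monotonicity of $\tanh$ gives $f_\alpha(\Delta_n)\le \exp(-\tanh(\alpha\tau))$. Combining this with $0<g_n\le\epsilon$ yields $\sum_{n}a^{time}_u(n)\le |\mathcal N|\,a\,\epsilon\,\exp(-\tanh(\alpha\tau))$, and this quantity is strictly positive, so the ratio is well defined. Dividing the numerator bound by the denominator bound cancels $a$ and gives
\[
\frac{\max_{r}a^{time}_u(r)}{\sum_{n}a^{time}_u(n)}\ \ge\ \frac{a\beta e^{-1}}{|\mathcal N|a\epsilon\, e^{-\tanh(\alpha\tau)}}=\frac{\beta}{|\mathcal N|\epsilon}\exp\bigl(\tanh(\alpha\tau)-1\bigr).
\]

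\textbf{Monotonicity.} With $\mathcal R,\mathcal N,\beta,\epsilon,\tau,\bar n$ fixed, the prefactor $\beta/(|\mathcal N|\epsilon)$ is a positive constant. The map $\alpha\mapsto\tanh(\alpha\tau)$ is increasing because $\tau>0$, and $\exp$ is increasing, so the right-hand side is non-decreasing in $\alpha$ (in fact strictly increasing). Since $\alpha=n_u/\bar n$, larger groups yield a weakly larger bound.

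The main subtlety I expect is interpretive rather than technical. Varying $\alpha$ through $n_u$ while holding $\mathcal R$ and $\mathcal N$ fixed must be read as comparing the closed-form bound as a function of $\alpha$, not as comparing two actual corpora. I would state explicitly that the claim concerns the right-hand side viewed as a function of $\alpha$. I would also handle the edge case $\alpha\tau\to 0$, where the bound degrades to $\beta/(e|\mathcal N|\epsilon)$, which is still positive.
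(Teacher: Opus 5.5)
Your proposal is correct and follows the paper's proof essentially step for step. Both arguments bound each relevant contribution below by $a\beta e^{-1}$ via $\tanh(\cdot)<1$, bound each noisy contribution above by $a\epsilon\exp(-\tanh(\alpha\tau))$ via monotonicity of $\tanh$ and $\Delta_n\ge\tau$, divide so that $a$ cancels, and read monotonicity off $\alpha\mapsto\tanh(\alpha\tau)$, with the same caveat that it concerns the bound as a function of $\alpha$ rather than realized ratios across groups. One small wording fix: the ratio is well defined because each term $a\,g_n\exp(-\tanh(\alpha\Delta_n))$ is strictly positive (the paper uses $g_n>0$ for this), not because the upper bound on the sum is positive.
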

The bound excludes final max aggregation and end-to-end retrieval. It also permits over-suppression of distant relevant states.

\subsubsection{Chunk Score Projection and Localized Approximate PPR}

STIBP scores are projected onto chunks and diffused on a $\phi$-derived chunk graph without further LLM calls. Let $C_q$ contain direct ANN candidates and chunks incident to nonzero-influence states; projection and PPR are restricted to this query-induced set.
Entity influence is projected onto chunks by
\[
{\rm Score}(c)=\sum_{v\in\mathcal{V}(c)} a(v)\cdot \ln(1+N_v),
\]
where $N_v=|\{w\in\mathcal{V}\mid\phi(w)=\phi(v)\}|$. We form the
non-negative prior score
\[
\widetilde I(c)=
\lambda\max\{0,\operatorname{sim}(q,c)\}
+\ln(1+{\rm Score}(c)),
\]
and normalize it over $C_q$ as
\[
p_{\mathrm{ST}}(c)=
\begin{cases}
\widetilde I(c)/\sum_{z\in C_q}\widetilde I(z),
& \text{if }\sum_z\widetilde I(z)>0,\\
1/|C_q|, & \text{otherwise}.
\end{cases}
\]
The uniform fallback handles an all-zero query; $\lambda=0.3$ is selected as described in Section~\ref{sec:hyper}. The chunk graph connects
chunks sharing a name-equivalent entity:
\[
\mathcal{B}(c)=\{c'\mid c'\neq c,\;\exists u\in\mathcal{V}(c),v\in\mathcal{V}(c'):
\phi(u)=\phi(v)\}.
\]
Let $P$ be the column-normalized transition matrix, with $P_{c'c'}=1$ for dangling chunks. With $d=0.85$, the PPR iteration is
\[
\pi^{(t)}=(1-d)p_{\mathrm{ST}}+dP\pi^{(t-1)},
\qquad \pi^{(0)}=p_{\mathrm{ST}},
\]
We iterate until $\max_c|\pi^{(t)}(c)-\pi^{(t-1)}(c)|<10^{-6}$ and pass the top-$k$ chunks ranked by $\pi^{(T)}$ to the generator. The following identity supports the design. Its sufficient alignment condition is not independently verified.

\begin{proposition}[Discounted Evidence-Mass Identity under Prior Alignment]
\label{thm:ppr_evidence_mass}
Let $p_{\mathrm{ST}}$ be the normalized chunk prior induced
by STIBP and let $p_{\mathrm{B}}$ be a normalized
binary entity-match prior. For a ground-truth evidence set $\mathcal{R}$, define
its discounted evidence-reachability footprint as
\[
h_{\mathcal{R}} =
(1-d)\sum_{\ell=0}^{\infty}d^\ell (P^\ell)^\top \mathbf{1}_{\mathcal{R}},
\]
where $P$ is the column-stochastic transition matrix of the
query-induced chunk graph. The evidence-mass
difference is exactly
\[
M_{\mathcal{R}}(p_{\mathrm{ST}})
-M_{\mathcal{R}}(p_{\mathrm{B}})
=h_{\mathcal{R}}^\top(p_{\mathrm{ST}}-p_{\mathrm{B}}).
\]
Consequently, if
$h_{\mathcal{R}}^\top(p_{\mathrm{ST}}-p_{\mathrm{B}})\geq \gamma>0$,
then
\[
M_{\mathcal{R}}(p_{\mathrm{ST}})
\geq
M_{\mathcal{R}}(p_{\mathrm{B}})+\gamma,
\]
where $M_{\mathcal{R}}(p)=\mathbf{1}_{\mathcal{R}}^\top\pi(p)$
denotes the PPR evidence mass assigned to $\mathcal{R}$ under
prior $p$.
\end{proposition}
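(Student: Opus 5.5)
The plan is to solve the PPR fixed point in closed form and then use the fact that the map $p\mapsto\pi(p)$ is linear. Because $P$ is column-stochastic, $\|dP\|_1=d<1$, so $I-dP$ is invertible and its inverse is the convergent Neumann series $\sum_{\ell\ge0}d^\ell P^\ell$. The limit of the iteration $\pi^{(t)}=(1-d)p+dP\pi^{(t-1)}$ is the unique fixed point of $\pi=(1-d)p+dP\pi$. This gives
\[
\pi(p)=(1-d)(I-dP)^{-1}p=(1-d)\sum_{\ell=0}^{\infty}d^\ell P^\ell p .
\]
The fixed point does not depend on the initialization $\pi^{(0)}=p$. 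Convergence also follows directly, since the error contracts by a factor $d$ at each step in $\ell_1$.

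Next, I substitute this expression into $M_{\mathcal R}(p)=\mathbf 1_{\mathcal R}^\top\pi(p)$. The series converges absolutely: each term is bounded by $d^\ell\|p\|_1$ because $\|P^\ell\|_1=1$. So the scalar product can be exchanged with the sum, and each term transposed:
\[
M_{\mathcal R}(p)=(1-d)\sum_{\ell}d^\ell\,\mathbf 1_{\mathcal R}^\top P^\ell p=\Bigl((1-d)\sum_\ell d^\ell (P^\ell)^\top\mathbf 1_{\mathcal R}\Bigr)^{\!\top}p=h_{\mathcal R}^\top p .
\]
The same convergence argument, now applied to $(P^\top)^\ell$ with $\|P^\top\|_\infty=1$, shows that $h_{\mathcal R}$ is well defined. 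Its entries lie in $[0,1]$.

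$M_{\mathcal R}$ is linear in the prior, so $M_{\mathcal R}(p_{\mathrm{ST}})-M_{\mathcal R}(p_{\mathrm B})=h_{\mathcal R}^\top(p_{\mathrm{ST}}-p_{\mathrm B})$ holds exactly. The consequence is immediate: if the right-hand side is at least $\gamma$, then adding $M_{\mathcal R}(p_{\mathrm B})$ to both sides gives the claimed inequality.

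The only step that needs care is making sure both priors live on the same index set $C_q$ and share one matrix $P$. This includes the dangling-chunk convention $P_{c'c'}=1$, which keeps $P$ column-stochastic. If $p_{\mathrm B}$ is defined on a different support, I would extend it by zeros to $C_q$; this does not affect linearity.

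I would also remark that the statement concerns the exact fixed point $\pi(p)$, not the truncated iterate $\pi^{(T)}$. For the truncated iterate, an analogous identity holds with $h_{\mathcal R}$ replaced by its partial sum plus a $d^T$ remainder term, but I would not need it here.
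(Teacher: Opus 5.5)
Your proposal is correct and follows the paper's route. You write the PPR fixed point as the Neumann series $(1-d)\sum_{\ell\ge0}d^\ell P^\ell p$, use absolute convergence to pass $\mathbf{1}_{\mathcal R}^\top$ through the sum and transpose each term to get $M_{\mathcal R}(p)=h_{\mathcal R}^\top p$, and conclude by linearity; your extra remarks (well-definedness of $h_{\mathcal R}$ with entries in $[0,1]$, a common support $C_q$ for both priors, and $\pi(p)$ versus the truncated iterate $\pi^{(T)}$) are sound refinements that the identity does not need.
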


Here $h_{\mathcal R}(c)$ is the discounted forward-walk reachability from $c$ to the evidence set. The binary diagnostic does not verify the alignment condition because reachability, seed coverage, empty matches, and support size are unmeasured (Appendix~\ref{app:SecD}).

Symbolic indexing requires $O(N_C+N_V+N_E+M+\textstyle\sum_g n_g\log n_g)$ time and linear storage. After candidate formation, STIBP and PPR scale with activated structures; Appendix~\ref{app:complexity} gives the full online-complexity expression.
\section{Experiments}

We evaluate STITCH-RAG along four axes: end-to-end answer accuracy (RQ1), answer quality beyond correctness (RQ2), module-level causality (RQ3), and computational efficiency (RQ4).

\subsection{Experimental Setup}

\subsubsection{Datasets}
We use three datasets covering structured multi-hop QA and open-ended domain-specific questions. For HotpotQA~\cite{yang2018hotpotqa} and 2WikiMultiHopQA~\cite{ho2020constructing}, we follow the HippoRAG~\cite{gutierrez2024hipporag} protocol and draw a stratified sample of 1{,}000 questions per dataset (composition in Appendix~\ref{app:implementation}). The sampled questions are fixed across methods. We report three repeated end-to-end runs, not three resampled datasets; at temperature zero, residual variation can arise from API and extraction-service nondeterminism. Mix contains 512 open-ended questions and serves as an exploratory cross-domain evaluation. Because Mix lacks supporting-fact annotations and reports only LLM-Acc, it is not an independent objective test of generalization.

\subsubsection{Baselines}
The baselines span the main RAG design choices: zero-shot generation and dense-retrieval RAG as lower bounds; LightRAG~\cite{guo2024lightrag} for relation extraction; LinearRAG~\cite{zhuang2025linearrag} as the closest structural predecessor, using PPR with binary entity-match initialization over a flat entity graph; HippoRAG~\cite{gutierrez2024hipporag} for PPR-based graph memory; and Cog-RAG~\cite{hu2026cog} and Hyper-RAG~\cite{feng2026hyperrag} as recent hypergraph-based methods.

\subsubsection{Metrics}\label{sec:metrics}
We report three end-to-end metrics: Contain-Acc, which tests whether the generated answer contains the reference substring; LLM-Acc, in which Qwen-Max judges answers generated by Qwen3.5-flash; and EM, in which a distilled answer must exactly match the reference. Because Qwen-Max and Qwen3.5-flash belong to the same model family, LLM-Acc can reflect same-family judge preference. Contain-Acc and EM do not use an LLM judge and therefore serve as controls. We treat Contain-Acc, EM, and Recall@8 as primary objective evidence on HotpotQA and 2Wiki, and treat LLM-Acc and pairwise quality dimensions as auxiliary preference-based evaluations. For retrieval, we report passage-level $\operatorname{Recall@8}(q)=|G_q\cap R_q^8|/|G_q|$ using supporting-fact annotations. This metric tests whether answer-quality differences are consistent with retrieval coverage, but does not isolate retrieval as the causal source. Mix lacks such annotations, so no retrieval metric is reported for it.

\subsubsection{Implementation Details}
All methods use \texttt{text-embedding-v4} with 1024 dimensions and Qwen3.5-flash at temperature zero. On a 200-question HotpotQA development subset, replacing the original embeddings of LightRAG and LinearRAG changes LLM-Acc by at most 0.4 and 0.2 points, respectively. This check limits embedding sensitivity for these two baselines on that subset, but does not establish the same property for every baseline. STITCH-RAG uses $k=8$, $\delta=0.5$, and $d=0.85$, selected on the same held-out set and fixed across datasets. All experiments run through the Alibaba Cloud API; every repeated run uses the same evaluation questions.

\subsection{Generation Accuracy (RQ1)}

\begin{table*}[t]
\centering
\caption{Answer accuracy on HotpotQA, 2Wiki, and Mix. Best result
per column bolded. Only STITCH-RAG was repeated
end to end; its entries average three runs on the same fixed
question samples, with all run-level standard deviations below
0.004. Baseline run-level variances are unavailable. The table
therefore ranks reported point estimates but does not establish
statistical superiority.}
\label{tab:q1}
\resizebox{0.95\linewidth}{!}{
\begin{tabular}{lccccccc}
\toprule
Method & \multicolumn{3}{c}{HotpotQA} &
\multicolumn{3}{c}{2Wiki} & Mix \\
\cmidrule(lr){2-4}\cmidrule(lr){5-7}\cmidrule(lr){8-8}
& Contain-Acc & LLM-Acc & EM &
Contain-Acc & LLM-Acc & EM & LLM-Acc \\
\midrule
Zero-shot & 0.422 & 0.451 & 0.300 &
0.505 & 0.398 & 0.345 & 0.269 \\
Standard-RAG & 0.700 & 0.702 & 0.460 &
0.689 & 0.617 & 0.467 & 0.669 \\
HippoRAG & 0.690 & 0.835 & 0.609 &
0.555 & 0.575 & 0.476 & 0.823 \\
Cog-RAG & 0.822 & 0.843 & 0.562 &
0.768 & 0.700 & 0.554 & 0.831 \\
Hyper-RAG & 0.735 & 0.808 & 0.511 &
0.785 & 0.745 & 0.541 & 0.808 \\
LightRAG & 0.861 & 0.877 & 0.600 &
0.821 & 0.674 & 0.518 & 0.877 \\
LinearRAG & 0.745 & 0.887 & 0.649 &
0.810 & 0.850 & \textbf{0.688} & 0.762 \\
\textbf{STITCH-RAG} & \textbf{0.900} & \textbf{0.895} &
\textbf{0.654} & \textbf{0.877} & \textbf{0.861} &
0.670 & \textbf{0.884} \\
\bottomrule
\end{tabular}}
\end{table*}
STITCH-RAG ranks first in Contain-Acc and LLM-Acc on every dataset in Table~\ref{tab:q1}. Relative to the strongest per-dataset baseline, the Contain-Acc/LLM-Acc margins are 3.9/0.8 points on HotpotQA and 5.6/1.1 points on 2Wiki. The Mix LLM-Acc margin is 0.7 points. LinearRAG is the closest structural comparison, and STITCH-RAG exceeds it by 15.5 and 6.7 Contain-Acc points on HotpotQA and 2Wiki, respectively. These aggregate differences do not isolate the responsible component. LinearRAG has higher 2Wiki EM (0.688 versus 0.670); without answer-length statistics or error analysis, we report this as a limitation rather than attribute it to response verbosity. Contain-Acc and Recall@8 provide the strongest objective evidence. EM is mixed, and  the smaller LLM-Acc margins remain auxiliary.

\begin{table}[t]
\centering
\caption{Passage-level Recall@8 on HotpotQA and 2Wiki,
defined as $|G_q \cap R_q^8| / |G_q|$ where $G_q$ is the
ground-truth supporting chunk set and $R_q^8$ is the top-8
retrieved set. Mix is excluded because supporting-fact
annotations are unavailable. Only methods with publicly
available retrieval outputs under the standardized embedding
protocol are included.}
\label{tab:retrieval_recall}
\resizebox{0.88\linewidth}{!}{
\begin{tabular}{lcc}
\toprule
Method & HotpotQA & 2Wiki \\
\midrule
Standard-RAG & 0.612 & 0.587 \\
LightRAG & 0.741 & 0.693 \\
LinearRAG & 0.723 & 0.712 \\
\textbf{STITCH-RAG} & \textbf{0.798} & \textbf{0.769} \\
\bottomrule
\end{tabular}}
\end{table}

Among methods with standardized retrieval outputs, STITCH-RAG exceeds LinearRAG in Recall@8 by 7.5 points on HotpotQA and 5.7 points on 2Wiki. This pattern is consistent with broader retrieval coverage, but the comparison omits methods without standardized outputs and does not show that retrieval alone causes the answer-accuracy gains. The margins also match the mechanism in Proposition~\ref{thm:ppr_evidence_mass}, though the experiment does not verify its alignment condition because $\gamma$ is not measured per query.

\subsection{Answer Quality Comparison (RQ2)}

We apply the LightRAG pairwise protocol as an auxiliary preference-based evaluation of comprehensiveness, diversity, and empowerment. STITCH-RAG exceeds the 50\% win-rate line on most dataset--dimension pairs (Figure~\ref{q2}, Appendix~\ref{app:additional_results}), with the largest margins in comprehensiveness and diversity on HotpotQA and 2Wiki, consistent with the Recall@8 results in Table~\ref{tab:retrieval_recall}. On Mix, LightRAG remains competitive in selected dimensions, plausibly because explicit relation extraction can better encode structured domain knowledge when topic-hyperedge boundaries are less distinct. Both rely on an LLM judge and remain exploratory.

\subsection{Ablation Study (RQ3)}
\label{sec:ablation}

\begin{table*}[!t]
\centering
\caption{Ablation of STITCH-RAG (three-run LLM-Acc average).
The upper block removes STIBP/PPR or uses dense retrieval; the lower
block removes one STIBP channel or replaces continuous priors with
binary initialization.}
\label{tab:ablation_extended}
\resizebox{0.96\textwidth}{!}{
\begin{tabular}{lccc}
\toprule
Variant & HotpotQA LLM-Acc & 2Wiki LLM-Acc
& Mix LLM-Acc \\
\midrule
\textbf{Full STITCH-RAG} & \textbf{0.895} & \textbf{0.861}
& \textbf{0.884} \\
w/o STIBP & 0.823 & 0.707 & 0.800 \\
w/o PPR & 0.825 & 0.793 & 0.854 \\
w/o all (dense retrieval) & 0.761 & 0.684 & 0.672 \\
\midrule
w/o spatial bridging &       0.838     &      0.742        &  0.814\\
w/o index-proximity bridging & 0.840 & 0.748 & 0.815\\
Strict binary-only initialization diagnostic
&      0.459      &        0.365      &  0.469\\
\bottomrule
\end{tabular}}
\end{table*}

\begin{table*}[!t]
\centering
\caption{Efficiency and LLM-Acc on Mix. Indexing costs are one-time;
query-stage tokens include answer generation and method-specific
query-time LLM calls under identical hardware and API conditions.}
\label{q4}
\resizebox{0.96\textwidth}{!}{\begin{tabular}{lrrrrrrr}
\toprule
\textbf{Method} & \multicolumn{2}{c}{\textbf{Time (s)}}
& \multicolumn{4}{c}{\textbf{Token Consumption}}
& \textbf{LLM-Acc} \\
\cmidrule(lr){2-3}\cmidrule(lr){4-7}
& \textbf{Indexing} & \textbf{Retrieval}
& \textbf{Idx. Prompt} & \textbf{Idx. Completion}
& {\textbf{Query Prompt}} &
{\textbf{Query Completion}} & \\
\midrule
HippoRAG & 1706.91 & 39.56 & 1382281 & 1321272 &
55845.91 & 3774.76 & 0.823 \\
Cog-RAG & 19109.94 & 126.88 & 3082137 & 7988713 &
36840.47 & 13246.30 & 0.831 \\
Hyper-RAG & 70414.64 & 52.20 & 4652473 & 7264711 &
18557.08 & 5521.35 & 0.808 \\
LightRAG & 89882.40 & 75.87 & 6588007 & 8685853 &
29795.00 & 8148.52 & 0.877 \\
LinearRAG & 633.37 & 23.00 & 0 & 0 &
6395.72 & 261.50 & 0.762 \\
STITCH-RAG & 10384.41 & 31.12 & 1229660 & 14196127 &
4054.52 & 2558.72 & 0.884 \\
\bottomrule
\end{tabular}}
\end{table*}
Table~\ref{tab:ablation_extended} provides controlled evidence that STIBP and PPR contribute non-redundant gains: removing STIBP/PPR reduces HotpotQA LLM-Acc by 7.2/7.0 points and 2Wiki LLM-Acc by 15.4/6.8 points. The larger STIBP loss on 2Wiki is consistent with stronger cross-chunk entity-tracking demands. Under the fixed pipeline, semi-merging achieves higher point estimates than both controlled merge alternatives (Appendix~\ref{app:additional_results}), although this diagnostic does not establish general optimality. The lower block shows that both bridging channels and frequency-adaptive decay are beneficial; adaptive decay reaches 0.884 on Mix, compared with 0.853 and 0.846 for fixed and logarithmic scaling. Each variant retains the remaining pipeline components and changes only the stated operation. Appendix~\ref{app:additional_results} details the binary-initialization diagnostic and its controls, while Appendix~\ref{app:implementation} provides the extraction prompt and phase-level visualization.

\subsubsection{Hyperparameter Sensitivity Analysis}
\label{sec:hyper}
We select $\delta=0.5$, $\lambda=0.3$, and $k=8$ on the HotpotQA development set and transfer them unchanged to 2Wiki and Mix. Appendix~\ref{app:sensitivity} reports the complete sweeps; among the tested decay rules, adaptive $\alpha$ attains the highest reported Mix LLM-Acc.

\subsection{Efficiency Analysis (RQ4)}
Table~\ref{q4} pairs STITCH-RAG's top Mix LLM-Acc with the second-lowest structural-retrieval latency. Its added cost occurs during one-time indexing of context-aware entity descriptions, whereas structural retrieval issues no LLM API calls. Appendix~\ref{app:complexity} gives the component-level cost decomposition.

\section{Conclusion}

STITCH-RAG couples topic-preserving hyperedges, semi-merged local entity states, frequency-adaptive STIBP, and continuous-prior PPR. Proposition~\ref{prop:topic_nonidentifiability} applies only to provenance-discarding pairwise inputs, and Proposition~\ref{prop:semimerge_recoverability} compares three controlled merge constructions. The decay and PPR results characterize the proposed mechanisms; they are not end-to-end optimality guarantees.

Under the reported protocol, STITCH-RAG attains the highest Contain-Acc and LLM-Acc point estimates in Table~\ref{tab:q1} and the highest Recall@8 in Table~\ref{tab:retrieval_recall}. The merge and binary diagnostics remain inconclusive because baseline variance and per-query alignment are unmeasured, 2Wiki EM is below LinearRAG, and the merge comparison does not establish a compression--relevance optimum.

Current limitations include extraction and exact-name-linking errors, preprocessing-order-sensitive index proximity, unmeasured baseline variance, same-family judging on Mix, and the approximately 15M-token Mix indexing cost. Future work will study robust entity linking, order-insensitive propagation, adaptive routing~\cite{jeong2024adaptive}, incremental indexing, and multimodal topic hyperedges.

\bibliography{aaai2027}
\vfill
\cleardoublepage

\section*{Reproducibility Checklist}
\paragraph{This paper:}
\begin{itemize}
	\item Includes a conceptual outline and/or pseudocode description of AI methods introduced (yes/partial/no/NA) {\bf yes}
	\item Clearly delineates statements that are opinions, hypothesis, and speculation from objective facts and results (yes/no) {\bf yes}
	\item Provides well-marked pedagogical references for less-familiar readers to gain background necessary to replicate the paper (yes/no) {\bf yes}
\end{itemize}

\paragraph{Does this paper make theoretical contributions? (yes/no)} {\bf yes}

If yes, please complete the list below.
\begin{itemize}
	\item All assumptions and restrictions are stated clearly and formally. (yes/partial/no) {\bf yes}
	\item All novel claims are stated formally (e.g., in theorem statements). (yes/partial/no) {\bf yes}
	\item Proofs of all novel claims are included. (yes/partial/no) {\bf yes}
	\item Proof sketches or intuitions are given for complex and/or novel results. (yes/partial/no) {\bf yes}
	\item Appropriate citations to theoretical tools used are given. (yes/partial/no) {\bf yes}
	\item All theoretical claims are demonstrated empirically to hold. (yes/partial/no/NA) {\bf partial}
	\item All experimental code used to eliminate or disprove claims is included. (yes/no/NA) {\bf yes}
\end{itemize}

\paragraph{Does this paper rely on one or more datasets? (yes/no)} {\bf yes}

If yes, please complete the list below.
\begin{itemize}
	\item A motivation is given for why the experiments are conducted on the selected datasets (yes/partial/no/NA) {\bf yes}
	\item All novel datasets introduced in this paper are included in a data appendix. (yes/partial/no/NA) {\bf yes}
	\item All novel datasets introduced in this paper will be made publicly available upon publication of the paper with a license that allows free usage for research purposes. (yes/partial/no/NA) {\bf yes}
	\item All datasets drawn from the existing literature (potentially including authors' own previously published work) are accompanied by appropriate citations. (yes/no/NA) {\bf yes}
	\item All datasets drawn from the existing literature (potentially including authors' own previously published work) are publicly available. (yes/partial/no/NA) {\bf yes}
	\item All datasets that are not publicly available are described in detail, with explanation why publicly available alternatives are not scientifically satisficing. (yes/partial/no/NA) {\bf NA}
\end{itemize}

\paragraph{Does this paper include computational experiments? (yes/no)} {\bf yes}

If yes, please complete the list below.
\begin{itemize}
	\item This paper states the number and range of values tried per (hyper-) parameter during development of the paper, along with the criterion used for selecting the final parameter setting. (yes/partial/no/NA) {\bf yes}
	\item Any code required for pre-processing data is included in the appendix. (yes/partial/no) {\bf no}
	\item All source code required for conducting and analyzing the experiments is included in a code appendix. (yes/partial/no) {\bf no}
	\item All source code required for conducting and analyzing the experiments will be made publicly available upon publication of the paper with a license that allows free usage for research purposes. (yes/partial/no) {\bf yes}
	\item All source code implementing new methods have comments detailing the implementation, with references to the paper where each step comes from. (yes/partial/no) {\bf yes}
	\item If an algorithm depends on randomness, then the method used for setting seeds is described in a way sufficient to allow replication of results. (yes/partial/no/NA) {\bf NA}
	\item This paper specifies the computing infrastructure used for running experiments (hardware and software), including GPU/CPU models; amount of memory; operating system; names and versions of relevant software libraries and frameworks. (yes/partial/no) {\bf partial}
	\item This paper formally describes evaluation metrics used and explains the motivation for choosing these metrics. (yes/partial/no) {\bf yes}
	\item This paper states the number of algorithm runs used to compute each reported result. (yes/no) {\bf no}
	\item Analysis of experiments goes beyond single-dimensional summaries of performance (e.g., average; median) to include measures of variation, confidence, or other distributional information. (yes/no) {\bf yes}
	\item The significance of any improvement or decrease in performance is judged using appropriate statistical tests (e.g., Wilcoxon signed-rank). (yes/partial/no) {\bf no}
	\item This paper lists all final (hyper-)parameters used for each model/algorithm in the paper's experiments. (yes/partial/no/NA) {\bf yes}
\end{itemize}

\vfill
\cleardoublepage
\thispagestyle{empty}

\section*{Appendices}
\renewcommand{\thesection}{\Alph{section}}
\setcounter{section}{0}

\section{Reproducibility and Implementation Details}
\label{app:implementation}

\noindent\textbf{Dataset sample
composition.} The stratified 1{,}000-question samples contain approximately 520 bridge and 480 comparison questions in HotpotQA, and approximately 280 comparison, 250 inference, 240 compositional, and 230 bridge questions in 2Wiki. Mix is reported only in aggregate, so its results are exploratory. A complete release should document domain counts, corpus and chunk sizes, question and reference-answer provenance, construction and deduplication procedures, train/test isolation checks, and data licenses.

\noindent\textbf{Reproducibility of baselines.}
For LightRAG\footnote{\url{https://github.com/HKUDS/LightRAG}}, Cog-RAG\footnote{\url{https://github.com/haoohu/Cog-RAG}}, Hyper-RAG\footnote{\url{https://github.com/iMoonLab/Hyper-RAG}}, and LinearRAG\footnote{\url{https://github.com/DEEP-PolyU/LinearRAG}}, we use the authors' released implementations and replace only the embedding model with \texttt{text-embedding-v4} to standardize vector representations.

\noindent\textbf{Baseline verification protocol.}
For Cog-RAG and Hyper-RAG, we first reproduce the reported results under the original evaluation settings, including the default embedding models and generation backbones. All reproduced metrics fall within 1.5 percentage points of the reported values; the remaining deviations may reflect API-version or random-seed differences. We then rerun every baseline under the standardized protocol in Section~\ref{sec:metrics}.

\noindent\textbf{Reviewer access and verification.}
The supplementary material provides frozen code snapshots with written sharing permission, checksummed Docker images with pinned dependencies, retrieved chunks and generated answers for all 1{,}000 questions on each dataset, and a comparison log for the reproduction check. The anonymous repository includes wrapper scripts, evaluation scripts, and \texttt{VERIFICATION.md} with checksums and expected metric outputs.

\noindent\textbf{Complete extraction prompt $p_{ext}$.}
The prompt below extracts topic summaries and chunk-conditioned entity descriptions. It is applied independently to each chunk and receives no cross-chunk context.
\begin{lstlisting}[caption={Extraction prompt $p_{ext}$},label=lst:prompt]
---Goal---
Extract the knowledge expressed in the task text. Divide the text into self-contained knowledge segments. For each segment, return:
- reference: the supporting text span.
- knowledge: one self-contained sentence describing the segment.
- entities: an object mapping each entity's complete name to its context-specific description.
Notes:
- Do not omit information stated in the original text.
- Each knowledge sentence must be understandable without additional context.
- Return only valid JSON that can be parsed by JSON.parse().
- If an entity description is absent or only repeats the name, use an empty string ("").
- Ground every description in the task text.
- Use the complete entity name as the key and include that complete name in its non-empty description.

---Output JSON format---
Return one or more objects in a JSON array following this structure:
[
  {
    "reference": "Original supporting text",
    "knowledge": "A self-contained knowledge statement.",
    "entities": {
      "Entity 1": "Entity 1 is described in the task text.",
      "Entity 2": "Entity 2 is described in the task text."
    }
  }
]

######################
Task text:
{content}
Output:
\end{lstlisting}
Each extracted knowledge segment creates one topic-summary hyperedge $e_{k,r}$, whose text is the segment's knowledge sentence. Each entry in its entities field creates an incident entity-state node $v=(v^{name},v^{des})$.

\noindent\textbf{Extraction convention summary.}
The extraction procedure uses three conventions. First, entity boundaries follow the maximal-noun-phrase rule: \emph{New York City} is one entity rather than three tokens. Second, each topic represents one coherent claim or narrative thread; a chunk covering a person's early career and later achievements should yield two topic segments. Third, entity descriptions are grounded in the current chunk, so repeated entities receive role-specific local descriptions.

\section{Algorithmic Details}
\label{app:secB}

\begin{algorithm}[!ht]
\caption{STITCH-RAG: Retrieval Pipeline}
\label{alg:stitch_rag}
\begin{algorithmic}[1]
\REQUIRE Query $q$; semi-merged hypergraph
$\mathcal{H}=(\mathcal{V},\mathcal{E},\phi,\psi)$; chunk
set $C$; parameters $\delta$, $\lambda$, $d$, $k$;
{ANN candidate budgets $K_V,K_E,K_C$}
\ENSURE Top-$k$ retrieved chunks $C_q^k$

\STATE {$(V_q^0,E_q^0,C_q^0)\leftarrow
\operatorname{ANN}(q;\mathcal V,\mathcal E,C,K_V,K_E,K_C)$}
\FOR{each entity node $v \in {V_q^0}$}
    \STATE $A_q(v) \leftarrow \operatorname{sim}(q, v^{des})
    \cdot \mathbf{1}[\operatorname{sim}(q, v^{des}) > \delta]$
\ENDFOR
\FOR{each hyperedge $e \in {E_q^0\cup
\{e:e\ni v,\ v\in V_q^0\}}$}
    \STATE $S_q(e) \leftarrow \operatorname{sim}(q, e)$
\ENDFOR

\STATE Initialize $a^{space}(v) \leftarrow 0$,
$a^{time}(v) \leftarrow 0$ for all $v$
\FOR{each activated node $u$ with $A_q(u) > 0$}
    \FOR{each hyperedge $e \ni u$}
        \FOR{each $v \in V_e \setminus \{u\}$}
            \STATE $a^{space}(v) \leftarrow
            \max\bigl(a^{space}(v),\;
            A_q(u) \cdot S_q(e)\bigr)$
        \ENDFOR
    \ENDFOR
    \STATE $V_u^* \leftarrow \{v \in \mathcal{V} \mid
    \phi(v) = \phi(u)\}$ \COMMENT{name-equivalence group}
    \STATE $\alpha \leftarrow |V_u^*| / \bar{n}$
    \COMMENT{frequency-adaptive decay rate}
    \FOR{each $v \in V_u^* \setminus \{u\}$}
        \STATE $\Delta t \leftarrow |t_u - t_v|$
        \STATE $a^{time}(v) \leftarrow
        \max\bigl(a^{time}(v),\;
        A_q(u) \cdot S_q(e_v) \cdot
        \exp(-\tanh(\alpha \Delta t)) \cdot
        \operatorname{sim}(q, c_v)\bigr)$
    \ENDFOR
\ENDFOR
\STATE $a(v) \leftarrow \operatorname{mean}\bigl(A_q(v),\,
a^{space}(v),\, a^{time}(v)\bigr)$ for all $v$
\STATE {$C_q\leftarrow C_q^0\cup
\{c_v:a(v)>0\}$}

\FOR{each chunk $c \in {C_q}$}
    \STATE ${\rm Score}(c) \leftarrow
    \sum_{v \in \mathcal{V}(c)} a(v) \cdot \ln(1 + N_v)$
    \STATE {$\widetilde I(c) \leftarrow
    \lambda\max\{0,\operatorname{sim}(q,c)\}
    +\ln(1+{\rm Score}(c))$}
\ENDFOR
\STATE {$p_{\mathrm{ST}}(c)\leftarrow
\widetilde I(c)/\sum_{z\in C_q}\widetilde I(z)$ if the sum is
positive; otherwise $p_{\mathrm{ST}}(c)\leftarrow1/|C_q|$}

\STATE Build chunk graph: $\mathcal{B}(c) \leftarrow
\{c' \mid \exists\, u \in \mathcal{V}(c),\,
v \in \mathcal{V}(c') \text{ s.t. } \phi(u) = \phi(v)\}$
\STATE {Column-normalize the adjacency as $P$;
set $P_{cc}=1$ when $\mathcal B(c)=\emptyset$}
\STATE {$\pi^{(0)}\leftarrow p_{\mathrm{ST}}$}
\REPEAT
    \FOR{each chunk $c\in C_q$}
        \STATE {$\pi^{(t)}(c) \leftarrow
        (1-d)p_{\mathrm{ST}}(c)+d\sum_{c'\in C_q}
        P_{cc'}\pi^{(t-1)}(c')$}
    \ENDFOR
\UNTIL{{$\max_c |\pi^{(t)}(c) -
\pi^{(t-1)}(c)| < 10^{-6}$}}

\STATE $C_q^k \leftarrow \operatorname{top\text{-}k}$
chunks by {$\pi^{(T)}(c)$}
\RETURN $C_q^k$
\end{algorithmic}
\end{algorithm}

\begin{algorithm}[!ht]
\caption{STITCH-RAG: Offline Hypergraph Construction}
\label{alg:stitch_index}
\begin{algorithmic}[1]
\REQUIRE Document collection $D$; max chunk size
$c_{\max}$; extraction prompt $p_{ext}$
\ENSURE Semi-merged hypergraph
$\mathcal{H}=(\mathcal{V}, \mathcal{E}, \phi, \psi)$

\STATE Split $D$ into chunks $C = \{c_k\}_{k=1}^{N_C}$
with $|c_k| \leq c_{\max}$
\FOR{each chunk $c_k \in C$}
    \STATE $\{g_{k,r}\}_{r} \leftarrow
    \operatorname{LLM}(c_k \mid p_{ext})$
    \COMMENT{extract (topic, entities) tuples}
    \FOR{each tuple $g_{k,r} = (e_{k,r}, V_{e_{k,r}})$}
        \STATE Add hyperedge $e_{k,r}$ to $\mathcal{E}$
        with topic summary text
        \FOR{each entity $(v^{name}, v^{des})
        \in V_{e_{k,r}}$}
            \STATE Create node $v$ with
            $\phi(v) \leftarrow v^{name}$,
            $\psi(v) \leftarrow v^{des}$
            \STATE Add $v$ to $\mathcal{V}$; record
            incidence $(v, e_{k,r})$ and chunk
            membership $(v, c_k)$
        \ENDFOR
    \ENDFOR
\ENDFOR

\STATE Group nodes by canonical name:
$\{G_s = \{v \mid \phi(v) = s\}\}_{s \in \Sigma}$
\STATE {Sort nodes within each group by the
deterministic preprocessing index $t_v$}

\STATE Embed all entity descriptions $\{v^{des}\}$,
topic summaries $\{e\}$, and chunks $\{c_k\}$

\RETURN $\mathcal{H} = (\mathcal{V}, \mathcal{E},
\phi, \psi)$
\end{algorithmic}
\end{algorithm}

\subsection{Complexity Analysis}
\label{app:complexity}

Let $N_C$, $N_V=|\mathcal{V}|$, $N_E=|\mathcal{E}|$, and $M=\sum_{e\in\mathcal{E}}|e|$ denote the numbers of chunks, entity-state nodes, topic hyperedges, and entity--hyperedge incidence links. The symbolic complexities exclude LLM inference and embedding execution, which are model-dependent and not assumed to be shared across systems. Table~\ref{q4} reports measured time and token consumption separately.

\noindent\textbf{Offline indexing.} Topic incidence storage
requires $O(M)$ links; $\phi$-based grouping requires
$O(N_V)$ hash operations; within-group sorting by chunk index
requires $O(\sum_g n_g\log n_g)$ where $n_g$ is the size of
group $g$. The total symbolic indexing cost is
\[
O\!\left(N_C+N_V+N_E+M+\textstyle\sum_g n_g\log n_g\right),
\]
and storage is $O(N_C+N_V+N_E+M)$. All indexing costs are
incurred once.

\noindent\textbf{Online retrieval.} Let $A_q$, $\mathcal{E}_q$, and $B_q$ denote the activated entity-state set, incident topic hyperedges, and edge set of the query-induced chunk graph. Let $T_{\mathrm{cand}}(N_V,N_E,N_C)$ denote candidate-generation cost over node, hyperedge, and chunk vector indexes. Under exact search, it is $O((N_V+N_E+N_C)d_{\mathrm{emb}})$; ANN cost depends on the selected index and approximation parameters. Spatial propagation costs
$O(\sum_{e\in\mathcal{E}_q}|e|)$, temporal propagation costs
$O(\sum_{u\in A_q}|V_u^*|)$ without materializing pairwise
links, and localized PPR costs $O(T|B_q|)$ for $T$ iterations.
The total online cost is
\[
\begin{aligned}
O\!\biggl(&T_{\mathrm{cand}}(N_V,N_E,N_C)
+\sum_{e\in\mathcal{E}_q}|e|\\
&+\sum_{u\in A_q}|V_u^*|+T|B_q|\biggr).
\end{aligned}
\]
Candidate generation remains corpus-dependent, whereas propagation depends on the activated subgraph. Table~\ref{q4} reports measured latency rather than extrapolating end-to-end scaling from propagation terms alone.

\subsection{Qualitative Compression--Context Interpretation of
Semi-Merging}
\label{app:ib_interpretation}

The three controlled merge strategies trade compact identity representation against local-context retention. Full merging collapses name-equivalent mentions and removes per-chunk descriptions. No merging retains each local description but exposes no cross-chunk name-equivalence relation. Semi-merging retains distinct local states while exposing their operational name equivalence to retrieval.

This interpretation motivates the representation design but is not an information-bottleneck result: the paper neither defines nor estimates mutual information for the three constructions. Table~\ref{tab:merge_extended} is an answer-level controlled comparison and does not directly measure compression or retained task information.

\subsection{Complete Hyperparameter Sensitivity Results}
\label{app:sensitivity}

Tables~\ref{tab:sens_delta} and \ref{tab:sens_lambda} report post-hoc transfer sensitivity of the activation threshold $\delta$ and balance parameter $\lambda$ on Mix. Bold values were chosen on the HotpotQA development set, not on Mix. Table~\ref{tab:sens_k} reports the top-$k$ sweep on that 200-question development set. Each sweep fixes all other hyperparameters and the retrieval pipeline at their defaults ($k=8$, $d=0.85$, adaptive $\alpha=n_u/\bar{n}$).

\begin{table}[h]
\centering
\caption{Post-hoc Mix LLM-Acc sensitivity to
$\delta$. The bold row denotes the value preselected on HotpotQA,
not a value selected on Mix. All other hyperparameters are fixed.}
\label{tab:sens_delta}
\begin{tabular}{lc}
\toprule
$\delta$   & Mix \\
\midrule
0.3 &        0.861         \\
0.4 &          0.815       \\
\textbf{0.5}  &     0.884    \\
0.6 &          0.838      \\
0.7 &        0.861       \\
\bottomrule
\end{tabular}
\end{table}

\begin{table}[h]
\centering
\caption{Post-hoc Mix LLM-Acc sensitivity to
$\lambda$. The bold row denotes the value preselected on HotpotQA,
not a value selected on Mix. All other hyperparameters are fixed.}
\label{tab:sens_lambda}
\begin{tabular}{lc}
\toprule
$\lambda$ & Mix \\
\midrule
0.1 &   0.823    \\
\textbf{0.3} &   0.884  \\
0.5 &     0.846  \\
0.7 &   0.846    \\
0.9 &    0.861   \\
\bottomrule
\end{tabular}
\end{table}

\begin{table}[h]
\centering
\caption{LLM-Acc under varying top-$k$
retrieval count on the 200-question HotpotQA development
set. The bold column denotes the selected value. All other
hyperparameters are fixed at their defaults.}
\label{tab:sens_k}
\resizebox{\linewidth}{!}{%
\begin{tabular}{lcccccc}
\toprule
$k$ & 5 & 6 & 7 & \textbf{8} & 9 & 10 \\
\midrule
LLM-Acc & 0.823 & 0.853 & 0.861 & \textbf{0.884}
& 0.869 & 0.876 \\
\bottomrule
\end{tabular}
}
\end{table}

\section{Additional Experimental Results}
\label{app:additional_results}

This appendix contains the phase-level ablation visualization, pairwise answer-quality comparison, merge-strategy diagnostic, and temporal-decay comparison cited in the main text.

\begin{figure*}[t]
\centering
\includegraphics[width=\textwidth]{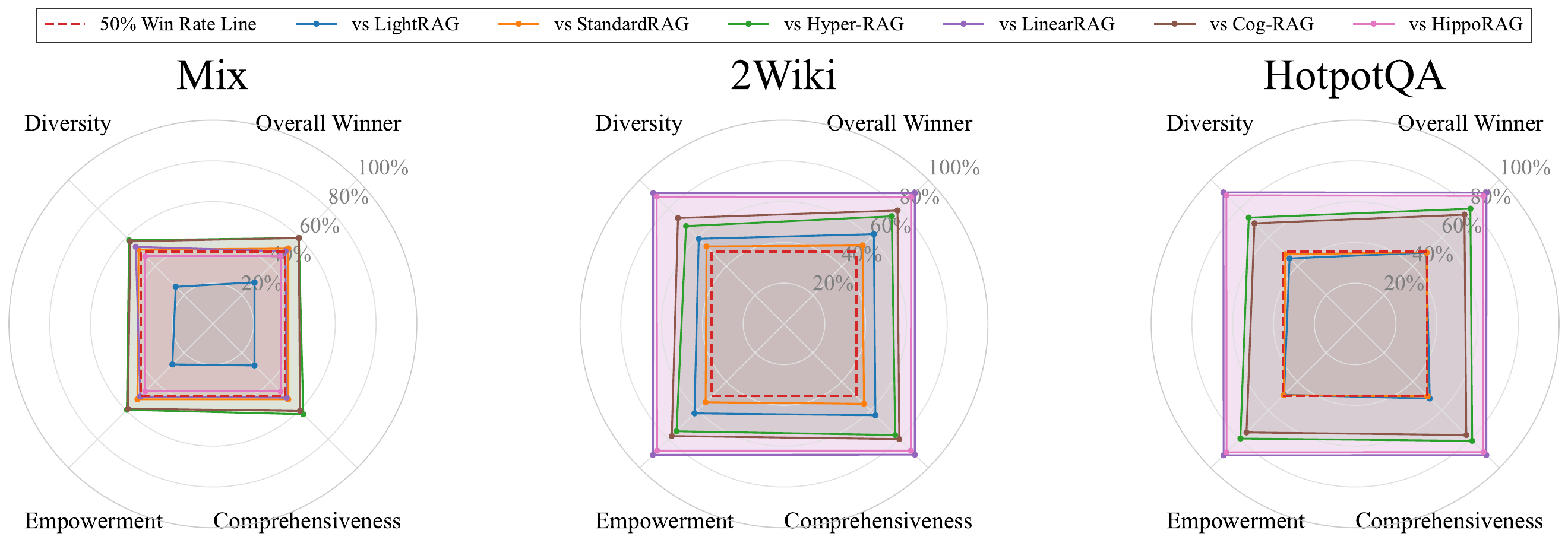}
\caption{Pairwise answer-quality win rates
of STITCH-RAG against five baselines on HotpotQA, 2Wiki,
and Mix, judged by Qwen-Max on comprehensiveness (coverage
of relevant details), diversity (variety of useful
perspectives), and empowerment (utility for informed
judgment). The dashed circle marks the 50\% win-rate line.
Values outside it favor STITCH-RAG.}
\label{q2}
\end{figure*}

\begin{table}[t]
\centering
\caption{LLM-Acc point estimates for one
implementation of each merge strategy under a fixed retrieval
pipeline. The diagnostic does not evaluate run-level variance,
retrieval recall, graph compression, statistical significance, or
all possible implementations of the three strategies.}
\label{tab:merge_extended}
\begin{tabular}{lccc}
\toprule
Strategy & HotpotQA & 2Wiki & Mix \\
\midrule
Full-Merge & 0.873 & 0.839 & 0.807 \\
Semi-Merge & \textbf{0.895} & \textbf{0.861}
& \textbf{0.884} \\
No-Merge  & 0.879 & 0.844 & 0.831 \\
\bottomrule
\end{tabular}
\end{table}

\begin{table}[t]
\centering
\caption{LLM-Acc on Mix under three temporal
decay functions, with all other pipeline components held
fixed (same hypergraph, same PPR initialization, same
$\alpha = n_u/\bar{n}$).}
\label{tab:decay_comparison}
\begin{tabular}{lc}
\toprule
Decay Function & LLM-Acc \\
\midrule
$\exp(-\alpha\Delta t)$ (pure exponential) & 0.853 \\
$\sigma(-\alpha\Delta t+b)$ (sigmoid gate) & 0.815 \\
$\exp(-\tanh(\alpha\Delta t))$ (ours) & \textbf{0.884} \\
\bottomrule
\end{tabular}
\end{table}

\section{Proofs of Main Results}
\label{app:SecD}

This appendix states the definitions used by Propositions~\ref{prop:topic_nonidentifiability} and~\ref{prop:semimerge_recoverability}, then provides all proofs.

\begin{definition}[Pairwise Topic Projection]
\label{def:pairwise_projection}
Given a topic hypergraph $\mathcal{H}=(\mathcal{V},\mathcal{E})$,
its pairwise projection is the graph
\[
\begin{aligned}
\Pi(\mathcal{H})&=(\mathcal{V},\mathcal{E}_2),\\
\mathcal{E}_2&=\bigl\{\{u,v\}\mid u\neq v,\;
\exists e\in\mathcal{E}\ \text{s.t.}\ u,v\in e\bigr\}.
\end{aligned}
\]
The projection retains pairwise co-occurrence but discards
the identity of the topic hyperedge that generated each pair.
\end{definition}

\begin{definition}[Semi-Merged Hypergraph]
\label{def:semi_merge}
Let $\mathcal{H}=(\mathcal{V},\mathcal{E},\phi,\psi)$ denote
a semi-merged hypergraph, where $\mathcal{V}$ is the set of
entity nodes, $\mathcal{E}$ is the set of hyperedges,
$\phi:\mathcal{V}\to\Sigma$ maps each node to an entity name
in name space $\Sigma$, and $\psi:\mathcal{V}\to\mathcal{D}$
maps each node to a context-specific description in description
space $\mathcal{D}$. Two nodes $v_i,v_j\in\mathcal{V}$ satisfy
$\phi(v_i)=\phi(v_j)$ when their normalized canonical-name strings
are identical. This operational relation approximates, but does not
guarantee, real-world co-reference: aliases can produce false splits,
whereas homonyms can produce false links. The equivalence classes
under $\phi$ are called \textit{name-equivalence groups}.
\end{definition}

\begin{definition}[Full-Merge and No-Merge Hypergraphs]
\label{def:full_no_merge}
A \textit{full-merge} hypergraph $\mathcal{H}_F$ collapses
all nodes within each name-equivalence group into a single node
and discards per-chunk descriptions. A \textit{no-merge}
hypergraph $\mathcal{H}_N$ treats every extracted entity
mention as an independent node with no cross-chunk identity
linkage; equivalently, $\mathcal{H}_N$ does not define a map
$\phi$ across chunks.
\end{definition}

\begin{proof}[\bf Proof of Proposition~\ref{prop:topic_nonidentifiability}]
We construct two topic hypergraphs with identical pairwise
projections and exhibit a query under which topic-selective
spatial propagation produces different influence assignments,
establishing the non-identifiability claim.

Let the entity set be $\mathcal{V}=\{a,b,c,d\}$. Consider two hypergraphs:
\[
\mathcal{H}_1=\bigl(\mathcal{V},\{\{a,b,c\},\{a,b,d\},\{a,c,d\},\{b,c,d\}\}\bigr)
\]
and
\[
\mathcal{H}_2=\bigl(\mathcal{V},\{\{a,b,c,d\}\}\bigr).
\]
Every pair from $\{a,b,c,d\}$ co-occurs in at least one hyperedge
of $\mathcal{H}_1$ (each triple contains all $\binom{3}{2}=3$ pairs,
and the four triples together cover all $\binom{4}{2}=6$ pairs),
and every pair co-occurs in the single hyperedge of $\mathcal{H}_2$.
Hence both hypergraphs project to the complete graph on four
vertices:
\[
\Pi(\mathcal{H}_1)=\Pi(\mathcal{H}_2)=K_4.
\]

Fix a query $q$ that is semantically relevant only to
the topic $\{a,b,c\}$, and let $A_q(a)=1$ be the only activated
source node. In $\mathcal{H}_1$, assign topic relevance scores
\[
S_q(\{a,b,c\})=1,
\qquad
S_q(e)=0
\quad
\text{for all other } e\in\mathcal{E}_1.
\]
The spatial bridging formula $a^{space}(v)=A_q(a)\cdot S_q(e)$
for each $v\in V_e\setminus\{a\}$ gives
\[
a^{space}(b)=1,\qquad a^{space}(c)=1,\qquad a^{space}(d)=0,
\]
since $d$ does not appear in any hyperedge $e$ with $S_q(e)>0$
under $\mathcal{H}_1$.

In $\mathcal{H}_2$, the only available hyperedge is $\{a,b,c,d\}$.
Two exhaustive cases arise. If $S_q(\{a,b,c,d\})>0$, spatial
bridging from $a$ assigns the same positive score $A_q(a)\cdot
S_q(\{a,b,c,d\})$ to each of $b$, $c$, and $d$, so $d$
receives positive influence. If $S_q(\{a,b,c,d\})=0$, no
influence propagates at all, so $b$ and $c$ receive zero
influence. In neither case can $\mathcal{H}_2$ reproduce the
selective outcome $a^{space}(b)>0$, $a^{space}(c)>0$,
$a^{space}(d)=0$.

Since $\Pi(\mathcal{H}_1)=\Pi(\mathcal{H}_2)$, any retrieval
rule whose input is restricted to $\Pi(\mathcal{H})$ must
produce identical outputs on $\mathcal{H}_1$ and $\mathcal{H}_2$,
yet the propagation outcomes differ for the query $q$
constructed above. Therefore, no retrieval rule defined solely
on the pairwise projection can, in general, reproduce
topic-selective propagation over hyperedges.
\end{proof}

\begin{proof}[\bf Proof of Proposition~\ref{prop:semimerge_recoverability}]
We verify the influence received by $v_r$ under each of
the three constructions, holding $q$, $G$, and all activation
scores fixed throughout. The assumptions in force are:
$A_q(v_s)>\delta$; $A_q(v_r)\leq\delta$, so after
thresholding $A_q(v_r)=0$; $v_r$ is not reachable from $v_s$
through any activated topic hyperedge; $S_q(e_{v_r})>0$ for
some hyperedge $e_{v_r}$ containing $v_r$; and
$\operatorname{sim}(q,c_{v_r})>0$.

\medskip
\noindent\textbf{No-merge hypergraph $\mathcal{H}_N$.}
In $\mathcal{H}_N$, the map $\phi$ is not defined across chunks,
so every entity mention is an independent node. The corresponding
name-equivalence group of $v_s$ under the no-merge construction
reduces to the
singleton $V_{v_s}^*=\{v_s\}$, so the temporal bridging loop
does not visit $v_r$. By hypothesis, $v_r$ is not reachable
from $v_s$ through any activated topic hyperedge, so
$a^{space}_{\mathcal{H}_N}(v_r)=0$. Combined with
$A_q(v_r)=0$, the mean aggregation gives
$a_{\mathcal{H}_N}(v_r)=0$. The evidence chunk $c_{v_r}$
therefore receives zero projected influence from $v_r$ and is
structurally unreachable from $v_s$ under $\mathcal{H}_N$.

\medskip
\noindent\textbf{Full-merge hypergraph $\mathcal{H}_F$.}
All nodes in $G$ are collapsed into a single representative
$\bar{v}=\operatorname{merge}(v_1,\ldots,v_m)$ with aggregated
description $\bar{v}^{des}$ and membership in the union of all
original hyperedges and chunks. Two cases arise.

\begin{enumerate}
\item[(2a)] If $A_q(\bar{v})=\operatorname{sim}(q,\bar{v}^{des})>\delta$,
then $\bar{v}$ is activated. Every chunk containing any original
member of $G$ receives the same influence through $\bar{v}$,
because the per-chunk descriptions $\psi(v_i)$ have been
discarded and replaced by the single aggregated description.
The projected score of $c_{v_r}$ under
$\mathcal{H}_F$ therefore equals the projected score of any
other chunk $c_{v_j}$ containing a member of $G$, regardless
of whether $v_j$ is evidence-bearing. The retrieval algorithm
cannot preferentially assign state-specific influence to
$c_{v_r}$ over $c_{v_j}$ using entity-state information.

\item[(2b)] If $A_q(\bar{v})\leq\delta$, then $\bar{v}$ is not
activated and no chunk in $G$'s scope receives entity-level
influence. This arises when the aggregated
description $\bar{v}^{des}$ averages over both query-relevant
and query-irrelevant mentions of the entity, diluting the
cosine similarity below the activation threshold $\delta$
even though $v_s$ would have been activated had its
per-chunk description been retained.
\end{enumerate}

In either case, $\mathcal{H}_F$ cannot assign state-specific
positive influence to $v_r$: in (2a) influence is distributed
uniformly across all $G$-members, and in (2b) it is zero.

\medskip
\noindent\textbf{Semi-merged hypergraph $\mathcal{H}$.}
In $\mathcal{H}$, nodes $v_s$ and $v_r$ retain distinct
descriptions $\psi(v_s)$ and $\psi(v_r)$, while $\phi$
exposes their name-equivalence relationship
$\phi(v_s)=\phi(v_r)$. The name-equivalence group satisfies
$V_{v_s}^*=\{v\in\mathcal{V}\mid\phi(v)=\phi(v_s)\}\ni v_r$.
Since $A_q(v_s)>\delta>0$, node $v_s$ is activated and
temporal bridging iterates over $V_{v_s}^*$. For target
$v_r$ at chunk-index distance $\Delta t=|t_{v_s}-t_{v_r}|$
and frequency-adaptive rate $\alpha=|V_{v_s}^*|/\bar{n}$,
the STIBP formula gives
\[
\begin{aligned}
a^{time}_{v_s}(v_r)
&= A_q(v_s)\cdot S_q(e_{v_r})
\cdot\exp\bigl(-\tanh(\alpha\Delta t)\bigr)\\
&\quad\cdot\operatorname{sim}(q,c_{v_r}).
\end{aligned}
\]
Here $e_{v_r}$ denotes a hyperedge containing $v_r$ with
$S_q(e_{v_r})>0$; such a hyperedge exists by assumption. If
$v_r$ belongs to multiple hyperedges, the max operation in
STIBP ensures the stored $a^{time}(v_r)$ is at least the
source-specific value computed
with this $e_{v_r}$, so it suffices to verify positivity for
one such hyperedge. Each factor is strictly positive:
$A_q(v_s)>\delta>0$ by hypothesis; $S_q(e_{v_r})>0$ by
assumption; $\exp(-\tanh(\alpha\Delta t))>0$ because
$\tanh(x)<1$ for every finite $x$ and the exponential is
always positive; and $\operatorname{sim}(q,c_{v_r})>0$ by
assumption. Hence $a^{time}_{v_s}(v_r)>0$ and therefore
$a^{time}(v_r)>0$.

Since $A_q(v_r)=0$ after thresholding and
$a^{space}(v_r)=0$ by hypothesis, the mean aggregation gives
\begin{align*}
a(v_r)
&=\operatorname{mean}\!\bigl(
  \underbrace{A_q(v_r)}_{=\,0},\;
  \underbrace{a^{space}(v_r)}_{=\,0},\;
  \underbrace{a^{time}(v_r)}_{>\,0}
\bigr)\\
&=\tfrac{1}{3}\,a^{time}(v_r)>0.
\end{align*}
Because $a(v_r)>0$, chunk $c_{v_r}$ receives strictly
positive projected influence via $v_r$ and enters the PPR
initialization with non-zero prior weight.

\medskip
\noindent\textbf{Recoverability separation.}
Combining the three cases,
\[
\begin{aligned}
a_{\mathcal{H}_N}(v_r)&=0,\\
a_{\mathcal{H}_F}(v_r)&\text{ is undifferentiated or zero},\\
a_{\mathcal{H}}(v_r)&>0.
\end{aligned}
\]
Among these three constructions, the semi-merged hypergraph is the only one
satisfying both requirements simultaneously: $\phi$ exposes
the name-equivalence relation needed for temporal bridging, and
$\psi$ retains the per-chunk descriptions needed for
state-specific activation, completing the proof.
\end{proof}

\begin{proof}[\bf Proof of Theorem~\ref{thm:stibp_snr}]
We establish a lower bound for the contributions propagated
from one fixed activated source $u$ by bounding
$\max_{r\in\mathcal{R}}a^{time}_u(r)$ from below and
$\sum_{n\in\mathcal{N}}a^{time}_u(n)$ from above. Throughout,
$A_q(u)=a>0$, $\alpha=|V_u^*|/\bar{n}>0$, and the nonempty
sets $\mathcal{R}$ and $\mathcal{N}$ satisfy the conditions in
the theorem. Because $g_n>0$ for every $n\in\mathcal N$, the
denominator is positive.

\medskip
\noindent\textbf{Lower bound for relevant targets.}
For any $r\in\mathcal{R}$, the definition of $a^{time}_u(r)$
and the assumption $g_r=S_q(e_r)\operatorname{sim}(q,c_r)\geq\beta$ give
\begin{align*}
a^{time}_u(r)
&= a\,g_r\exp(-\tanh(\alpha\Delta_r))\\
&\geq a\beta\exp(-\tanh(\alpha\Delta_r)).
\end{align*}
Since $\tanh(x)\in[0,1)$ for all finite $x\geq0$, we have
$-\tanh(\alpha\Delta_r)>-1$, so
$\exp(-\tanh(\alpha\Delta_r))>e^{-1}$ for every finite
$\Delta_r$. Here $\Delta_r$ is a finite non-negative
integer because chunk indices are bounded, and $\alpha>0$ by
assumption. Therefore
\[
a^{time}_u(r)\geq a\beta e^{-1}
\qquad\text{for all }r\in\mathcal{R},
\]
and in particular
\[
\max_{r\in\mathcal{R}}a^{time}_u(r)\geq a\beta e^{-1}.
\]

\medskip
\noindent\textbf{Upper bound for noisy targets.}
For any $n\in\mathcal{N}$, the assumptions $0<g_n\leq\epsilon$
and $\Delta_n\geq\tau>0$ give
\begin{align*}
a^{time}_u(n)
&= a\,g_n\exp(-\tanh(\alpha\Delta_n))\\
&\leq a\epsilon\exp(-\tanh(\alpha\Delta_n)).
\end{align*}
The function $x\mapsto\exp(-\tanh(\alpha x))$ is
non-increasing for $x\geq0$ when $\alpha>0$, because
$\tanh(\alpha x)$ is non-decreasing in $x$. Since
$\Delta_n\geq\tau$, we obtain
$\exp(-\tanh(\alpha\Delta_n))\leq\exp(-\tanh(\alpha\tau))$,
and hence
\[
a^{time}_u(n)\leq a\epsilon\exp(-\tanh(\alpha\tau))
\qquad\text{for all }n\in\mathcal{N}.
\]
Summing over $\mathcal{N}$:
\[
\sum_{n\in\mathcal{N}}a^{time}_u(n)
\leq a|\mathcal{N}|\epsilon\exp(-\tanh(\alpha\tau)).
\]

\medskip
\noindent\textbf{Signal-to-noise ratio.}
Combining the two bounds:
\begin{align*}
\frac{\max_{r\in\mathcal{R}}a^{time}_u(r)}
{\sum_{n\in\mathcal{N}}a^{time}_u(n)}
&\geq
\frac{a\beta e^{-1}}
{a|\mathcal{N}|\epsilon\exp(-\tanh(\alpha\tau))}\\
&=
\frac{\beta}{|\mathcal{N}|\epsilon}
\exp(\tanh(\alpha\tau)-1).
\end{align*}

\medskip
\noindent\textbf{Monotonicity in $\alpha$.}
For fixed $\mathcal R$, $\mathcal N$, $\beta$, $\epsilon$,
$\tau$, and $\bar n$, the function
$\alpha\mapsto\tanh(\alpha\tau)$ is strictly increasing for
$\alpha\geq0$. Therefore
$\exp(\tanh(\alpha\tau)-1)$ is non-decreasing in $\alpha$.
This statement holds with the sets and constants fixed and does
not compare realized ratios across name-equivalence groups of
different sizes.
\end{proof}

\begin{proof}[\bf Proof of Proposition~\ref{thm:ppr_evidence_mass}]
The proof proceeds by showing that PPR evidence mass is
linear in the initialization prior, then applying linearity
to compare $p_{\mathrm{ST}}$ and $p_{\mathrm{B}}$ directly.

\medskip
\noindent\textbf{Linearity of PPR evidence mass.}
The PPR stationary distribution $\pi(p)$ under prior $p$ is
the unique fixed point of $\pi=(1-d)p+dP\pi$. Solving by
Neumann series, which converges absolutely because $d<1$ and
$P$ is column-stochastic with $\|P\|_1=1$:
\[
\pi(p)=(1-d)\sum_{\ell=0}^{\infty}d^\ell P^\ell p.
\]
The series converges in $\ell^1$ norm because
$(1-d)\sum_{\ell=0}^{\infty}d^\ell\|P^\ell p\|_1\leq
(1-d)\sum_{\ell=0}^{\infty}d^\ell\|p\|_1=\|p\|_1<\infty$
for any normalized prior $p$, justifying the interchange of
summation and inner product below. Therefore,
\begin{align*}
M_{\mathcal{R}}(p)
&= \mathbf{1}_{\mathcal{R}}^\top\pi(p)
 = \mathbf{1}_{\mathcal{R}}^\top(1-d)\sum_{\ell=0}^{\infty}d^\ell P^\ell p\\
 &= (1-d)\sum_{\ell=0}^{\infty}d^\ell\,\mathbf{1}_{\mathcal{R}}^\top P^\ell p.
\end{align*}
For each $\ell$, the scalar $\mathbf{1}_{\mathcal{R}}^\top P^\ell p$
equals $((P^\ell)^\top\mathbf{1}_{\mathcal{R}})^\top p$ because
$u^\top Av = (A^\top u)^\top v$ for any compatible vectors
and matrix. Hence,
\[
M_{\mathcal{R}}(p)
= \left((1-d)\sum_{\ell=0}^{\infty}d^\ell(P^\ell)^\top\mathbf{1}_{\mathcal{R}}\right)^\top p
= h_{\mathcal{R}}^\top p,
\]
where the interchange of the finite linear functional
$(\cdot)^\top p$ and the absolutely convergent series is
valid by the dominated convergence theorem for sums.

\medskip
\noindent\textbf{Evidence-mass comparison.}
Since $M_{\mathcal{R}}(p)=h_{\mathcal{R}}^\top p$ is linear in $p$,
\[
M_{\mathcal{R}}(p_{\mathrm{ST}})-M_{\mathcal{R}}(p_{\mathrm{B}})
= h_{\mathcal{R}}^\top(p_{\mathrm{ST}}-p_{\mathrm{B}}).
\]
The condition $h_{\mathcal{R}}^\top(p_{\mathrm{ST}}-p_{\mathrm{B}})\geq\gamma>0$
yields immediately
\[
M_{\mathcal{R}}(p_{\mathrm{ST}})\geq M_{\mathcal{R}}(p_{\mathrm{B}})+\gamma.
\]

\medskip
\noindent\textbf{Interpretation of $h_{\mathcal{R}}$.}
The $c$-th entry of $h_{\mathcal{R}}$ equals
$(1-d)\sum_{\ell=0}^\infty d^\ell[(P^\ell)^\top\mathbf{1}_\mathcal{R}]_c$,
the probability that a forward walk initialized at chunk $c$
occupies $\mathcal R$ when stopped after a geometrically
distributed number of transitions. The condition
$h_{\mathcal{R}}^\top(p_{\mathrm{ST}}-p_{\mathrm{B}})>0$
therefore holds whenever STIBP places more prior mass
on chunks with high discounted reachability to
$\mathcal{R}$ than binary initialization does. The
aggregate ablation in Table~\ref{tab:ablation_extended} is
consistent with this mechanism, but it does not establish the
alignment condition for every query.
\end{proof}

\medskip
\noindent\textbf{Prior and dangling-node convention.}
The retrieval procedure directly uses the
non-negative normalized prior
$p_{\mathrm{ST}}(c)=\widetilde I(c)/\sum_z\widetilde I(z)$
when the denominator is positive, with the same uniform fallback
used in the main text otherwise.
For every dangling chunk $c$, the self-loop $P_{cc}=1$ makes
$P$ column stochastic. Thus the assumptions used in the proposition,
the main-text iteration, and Algorithm~\ref{alg:stitch_rag} are
identical; no post-hoc normalization argument is required.

\medskip
\noindent\textbf{Geometrically stopped forward-walk
reading of $h_{\mathcal{R}}$.}
Let a walk start at chunk $c$ and stop at its current state with
probability $1-d$ before each transition. It stops after $\ell$
transitions with probability $(1-d)d^\ell$. Consequently,
$h_{\mathcal{R}}(c)$ is exactly the probability that the walk
occupies $\mathcal R$ when it stops. This interpretation follows
directly from the discounted occupancy series and does not require
a reverse-time transition process.

\section{Extended Related Work}
\label{app:extended_related}

This appendix expands the two comparisons that determine STITCH-RAG's design: what the index retains about a multi-entity topic and how the retriever initializes propagation. We compare published representations and algorithms; provenance-enriched variants outside those specifications are not ruled out.

\subsection{Representing Topic Provenance and Local Entity State}

GraphRAG extracts entity--relation triples and summarizes graph communities~\cite{edge2024local}. LightRAG reduces this indexing burden with a dual-layer graph~\cite{guo2024lightrag}, and NodeRAG uses heterogeneous node types to retain multiple evidence granularities~\cite{xu2025noderag}. These systems can encode rich pairwise relations. The distinction relevant here is narrower: if retrieval receives an unlabeled projection with no shared event, topic identifier, or provenance, it cannot reconstruct which $n$-ary group generated a pair. Proposition~\ref{prop:topic_nonidentifiability} formalizes this input-specific loss.

LinearRAG avoids LLM-based relation extraction and links canonical entity names for semantic bridging and PPR~\cite{zhuang2025linearrag}. Its published index represents matching names as one graph entity, so it does not expose the chunk-specific states used by STIBP. In STITCH-RAG, $\psi$ retains the local descriptions and $\phi$ links their normalized names. Proposition~\ref{prop:semimerge_recoverability} tests this difference with controlled full-merge, no-merge, and semi-merge constructions.

HyperGraphRAG encodes $n$-ary relational facts, Cog-RAG uses thematic hyperedges, and CogniRAG represents causal structures~\cite{luo2025hypergraphrag,hu2026cog,he2026cognirag}. STITCH-RAG adopts thematic hyperedges but does not collapse all occurrences of an entity name. OKH-RAG also models cross-occurrence structure through learned precedence and sequence inference~\cite{wu2026knowledge}; STITCH-RAG instead uses a training-free index-proximity rule gated by query relevance and frequency under $\phi$.

Chunk granularity changes the number and scope of these local states. Proposition-level retrieval favors narrow evidence units~\cite{chen2024dense}, while RAPTOR retrieves recursively summarized clusters~\cite{sarthi2024raptor}. AutoRAG reports that the preferred granularity depends on question complexity~\cite{kim2025autorag}. STITCH-RAG fixes the chunking policy and bridges the resulting boundaries; it does not determine the optimal chunk size.

\subsection{Initializing and Controlling Graph Propagation}

SubGraphRAG retrieves a query-matched local subgraph~\cite{li2025simple}, and GFM-RAG learns graph retrieval from training data~\cite{luo2025gfmrag}. HippoRAG and HippoRAG2 use PPR for broader diffusion~\cite{gutierrez2024hipporag,gutierrez2025rag}. Their entity-match prior is binary: every matched entity is seeded and every unmatched entity receives zero. STIBP instead derives a continuous prior from the query similarity of local entity states, their topic hyperedges, and their name-equivalent neighbors. Proposition~\ref{thm:ppr_evidence_mass} states the resulting evidence-mass identity and its sufficient alignment condition.

Hyper-RAG enumerates higher-order paths with beam search~\cite{feng2026hyperrag}; Cog-RAG filters themes before entity retrieval~\cite{hu2026cog}; and IGMiRAG adjusts mining depth from estimated question complexity~\cite{hou2026igmirag}. ReMindRAG stores previous traversal experience in graph-edge embeddings~\cite{hu2026remindrag}. STITCH-RAG uses no path enumeration or query-time LLM call: STIBP scores the activated hypergraph structures, and PPR is restricted to the induced chunk graph. Table~\ref{q4} reports the combined retrieval latency, not the isolated cost of localization.

Corrective RAG, Adaptive-RAG, Self-RAG, and StructRAG decide whether retrieval or a particular structure is needed~\cite{yan2024corrective,jeong2024adaptive,asai2024selfrag,li2024structrag}. That routing decision is separate from the retrieval operator evaluated here. Long-context studies likewise show that evidence placement and concentration affect multi-hop reasoning even when the source fits within the context window~\cite{li2024retrieval,xu2024retrieval}. These results motivate selective retrieval but do not identify a preferred graph representation.

LinearRAG remains the closest end-to-end structural comparison. STITCH-RAG changes its two relevant design choices: local entity states replace the single name-matched entity representation, and continuous STIBP scores replace binary PPR seeds. STITCH-RAG exceeds LinearRAG in Contain-Acc by 15.5 points on HotpotQA and 6.7 on 2Wiki, and in Recall@8 by 7.5 and 5.7 points. These aggregate margins do not separate the effects of representation, initialization, and localization; the controlled replacements in Table~\ref{tab:ablation_extended} provide the component-level evidence available in this study.

The deployment regime also matters. RAG is preferable to knowledge internalization when the corpus is large or updated independently of model training, whereas fine-tuning can remain competitive on narrow static domains~\cite{ovadia2024finetuning,balaguer2024rag}. Structured retrieval is most useful for questions that require cross-passage evidence rather than direct lookup~\cite{xiang2025whentousegraphs}. STITCH-RAG further assumes that repeated queries amortize its one-time hypergraph construction cost.
\end{document}